\newtheorem{theorem}{Theorem}[section]
\newtheorem{lemma}[theorem]{Lemma}
\newtheorem{problem}[theorem]{Problem}
\theoremstyle{definition}
\newtheorem{definition}[theorem]{Definition}
\newtheorem{example}[theorem]{Example}
\theoremstyle{remark}
\newtheorem{remark}[theorem]{Remark}
\theoremstyle{remark}
\newcommand{\SWITCH}[1]{\STATE \textbf{switch} (#1)}
\newcommand{\ENDSWITCH}{\STATE \textbf{end switch}}
\newcommand{\CASE}[1]{\STATE \textbf{case} #1\textbf{:} \begin{ALC@g}}
\newcommand{\ENDCASE}{\end{ALC@g}}
\newcommand{\CASELINE}[1]{\STATE \textbf{case} #1\textbf{:} }
\newcommand{\DEFAULT}{\STATE \textbf{default:} \begin{ALC@g}}
\newcommand{\ENDDEFAULT}{\end{ALC@g}}
\newcommand{\DEFAULTLINE}[1]{\STATE \textbf{default:} }
\newcommand{\field}{\ensuremath{\Bbbk}\xspace}
\newcommand{\fieldbar}{\ensuremath{\overline{\Bbbk}}\xspace}
\newcommand{\ZZ}{\ensuremath{\mathbb{Z}}}
\newcommand{\FF}{\ensuremath{\mathbb{F}}}
\newcommand{\QQ}{\ensuremath{\mathbb{Q}}}
\newcommand{\PP}{\ensuremath{\mathbb{P}}}
\newcommand{\AffineSpace}{\ensuremath{\mathbb{A}}}
\newcommand{\C}{\mathcal{C}}
\newcommand{\J}{\ensuremath{\mathcal{J}}\xspace}
\newcommand{\K}{\ensuremath{\mathcal{K}}\xspace}
\newcommand{\OO}{\ensuremath\mathcal{O}}
\DeclareMathOperator{\Jac}{Jac}
\newcommand{\Fp}{\mathbb{F}_p}
\newcommand{\Fpbar}{\overline{\mathbb{F}}_p} 
\newcommand{\Fpsq}{\mathbb{F}_{p^2}}
\newcommand\secp{\ensuremath{\lambda}\xspace}
\newcommand{\ourhash}{\textsf{KuHash}\xspace}
\newcommand{\Fqmult}{\texttt{M}\xspace}
\newcommand{\Fqsquare}{\texttt{S}\xspace}
\newcommand{\Fqadd}{\texttt{a}\xspace}
\newcommand{\PPAV}{p.p.a.v.\xspace}
\begin{document}

\title{Efficient $(3,3)$-isogenies on fast Kummer surfaces}

\author{Maria Corte-Real Santos}
\address{University College London, UK}
\email{maria.santos.20@ucl.ac.uk}

\author{Craig Costello}
\address{Microsoft Research, USA}
\email{craigco@microsoft.com}

\author{Benjamin Smith}
\address{Inria and \'Ecole polytechnique, Institut Polytechnique de Paris, Palaiseau, France}
\email{smith@lix.polytechnique.fr}

\begin{abstract}
 We give an alternative derivation of $(N,N)$-isogenies between fast Kummer surfaces which complements existing works based on the theory of theta functions. We use this framework to produce explicit formul\ae{} for the case of $N=3$, and show that the resulting algorithms are more efficient than all prior $(3,3)$-isogeny algorithms. 
\end{abstract}

\maketitle

\section{
    Introduction
}
\label{sec:intro}

Isogenies of elliptic curves are well-understood,
at least from an algorithmic point of view,
in theory and in practice.
Given the Weierstrass equation of an elliptic curve \(E\),
and a generator \(P\) of a finite subgroup of \(E\),
Vélu's formul\ae{}~\cite{velu1971isogenies}
allow us to write down polynomials defining a normalized
quotient isogeny \(\Phi: E \to E/\langle P\rangle\)
(with variants for alternative curve
models~\cite{moody2016analogues}, or for rational subgroups with 
irrational generators~\cite{kohel1996endomorphism}).
Building on these formul\ae{},
there also exist highly efficient algorithms
for evaluating an isogeny at points of \(E\)
\emph{without} deriving a polynomial 
representation for the isogeny
itself (including~\cite{bernstein2020faster}, \cite{costello2017simple},
and~\cite{renes2018computing}, for example).
Interest in these formul\ae{} and algorithms has recently intensified
with the development of \emph{isogeny-based cryptography}
as a source of cryptosystems conjectured to be resistant against quantum attacks.

As a generalization of an elliptic curve, we consider principally
polarized abelian varieties,
and the first non-elliptic examples are Jacobians of genus-2 curves.
Genus-2 curves are hyperelliptic curves with affine plane models
\[
    C: y^2 = f(x)
    \quad
    \text{where \(f(x)\) is squarefree of degree 5 or 6}
    \,,
\]
in characteristic not dividing $30$, and the Jacobian \(\J = \Jac(C)\) of \(C\)
is a 2-dimensional principally polarized abelian variety
(\PPAV),
birational to \(C^{(2)}\),
parameterizing the degree-0 Picard group of the curve \(C\).

Mumford~\cite{mumford1984tata}, Cantor~\cite{cantor1987computing},
Grant~\cite{grant1990formal}, and Flynn~\cite{flynnthesis,Flynn90}
laid the ground for explicit geometric and number-theoretic computations
with genus-2 Jacobians. Cassels and Flynn's text~\cite{cassels1996prolegomena}
presents a unified view of genus-2 arithmetic.
Later,
Gaudry~\cite{gaudry} proposed Kummer surfaces of genus-2
Jacobians as a setting for efficient discrete-logarithm-based
cryptosystems, building on a variant~\cite{chudnovsky1986sequences} of Lentra's ECM factoring algorithm~\cite{lenstra1987factoring}.
The Kummer surface $\K$ of a Jacobian $\J$ is the image of the quotient
morphism $\pi: \J \rightarrow \K = \J/\langle\pm 1 \rangle$;
as such, it is the genus-2 analogue of the \(x\)-coordinate of elliptic
curves.
Geometrically, Kummer surfaces have convenient models as 
quartic surfaces in \(\PP^3\) with 16 point singularities.

Cosset put Chudnovsky and Chudnovsky's Kummer ECM into practice in~\cite{cosset2010factorization},
while high-speed, high-security Kummer-based implementations of Diffie--Hellman key
exchange~\cite{bos2016fast,bernstein2014kummer,renes2016kummer}
and signature schemes~\cite{renes2016kummer,qDSA}
can give significant practical improvements over elliptic curves
in many contexts.

However, while the basic arithmetic of genus-2 Jacobians and Kummer surfaces
has matured,
and while cryptographic applications have driven great improvements
in the efficiency of the resulting formul\ae{} and algorithms,
the corresponding explicit theory of isogenies lags behind.
First, note that just as elliptic isogenies factor naturally into
compositions of scalar multiplications 
and isogenies with prime cyclic kernel
(i.e., isomorphic to \(\ZZ/N\ZZ\) with \(N\) prime),
isogenies of abelian surfaces (including Jacobians of genus-2 curves)
decompose into compositions of scalar multiplications
and \emph{\((N,N)\)-isogenies} (with kernel isomorphic to
\((\ZZ/N\ZZ)^2\)).\footnote{%
    In some special cases,
    depending on the endomorphism ring of the Jacobian,
    we can also have isogenies with cyclic kernel~\cite{dudeanu22}.
    These isogenies are beyond the scope of this article.
}
The fundamental task, then,
is to compute and evaluate \((N,N)\)-isogenies
where \(N\) is prime.
We can do this on the level of the Jacobian
(using e.g. correspondences on genus-2 curves~\cite{Smith2006}),
or we can use the fact that isogenies commute with \(-1\)
to move down to the more tractable Kummer surfaces.
Indeed, as Cassels and Flynn note,
``we lose nothing by going down to the Kummers, because [the map] lifts
automatically to a map of abelian
varieties.''~\cite[\S9.3]{cassels1996prolegomena}.

The case \(N = 2\) is classical:
explicit methods and formul\ae{}
go back to Richelot~\cite{richelot1837transformatione},
and were re-developed in modern terms
by Bost and Mestre~\cite{bost1988moyenne}
and Cassels and Flynn~\cite[\S 3]{cassels1996prolegomena}.
Going further, we find some first efforts 
at explicit curve-based formul\ae{} for the case \(N = 3\) 
by Smith~\cite{smith2012computing}
(building on an ineffective general method due to Dolgachev and Lehavi~\cite{dolgachev2008isogenous}),
and more general results due to Couveignes and Ezome~\cite{couveignes2015computing}.
Moving to general Kummer surfaces,
Bruin, Flynn and Testa~\cite{bruin2014descent},
Nicholls~\cite{nicholls2018descent},
and Flynn~\cite{flynn2015descent}
gave more powerful formul\ae{} for \(N = 3\), \(4\), and \(5\),
respectively,
in a number-theoretic context;
Nicholls even gives a method for general \(N\).
Flynn and Ti revisited the formul\ae{} for \(N = 3\)
in a cryptographic context~\cite{g2SIDH},
and Decru and Kunzweiler~\cite{DecruKunzweiler23} further optimized
these formul\ae{}, drastically improving their efficiency. 
However, none of these formul\ae{} make use of the special symmetries of the most
efficient Kummer surfaces that have been used in cryptographic implementations.

Bisson, Cosset, Lubicz, and Robert have advanced an ambitious program~\cite{bisson2011endomorphism,cossetthesis,robertthesis,LubiczRobert,robertHDR}
based on the theory of theta functions~\cite{mumford1984tata}
to provide asymptotically efficient algorithms for arbitrary odd \(N\)
(and beyond genus~2 to arbitrarily high dimension). The \texttt{AVIsogenies} software package based on their results is
publicly available~\cite{AVIsogenies}.
These algorithms are certainly compatible with fast Kummer surfaces,
but they target isogeny evaluation for general abelian
varieties, rather than the construction of compact explicit formul\ae{}
in genus~2 that can be studied, analysed, and optimized in their own right.
Nevertheless,
these techniques were recently revisited by Dartois, Maino, Pope and Robert~\cite{22isostheta} in the
context of cryptography to efficiently compute chains of
$(2,2)$-isogenies between products of elliptic curves in the theta
model.

\subsection{Contributions.}

In this article,
we give a general method for deriving explicit
formul\ae{} for isogenies of fast Kummer surfaces,
optimizing the approach of Bruin, Flynn, and Testa
by exploiting the high symmetry of these ``fast'' surfaces,
which are the most relevant for applications over finite fields.
Our methods are elementary
in the sense that they avoid explicitly using the heavy machinery of theta functions
required in~\cite{cosset2013algorithm,lubicz2022fast,22isostheta}
(though of course theta functions implicitly play a fundamental role in our techniques).
We apply these methods to give explicit examples
for \(N = 3\) and \(5\). 
For example, for \(N = 3\) we obtain
a map \(\phi: \K\to\K'\)
defined by
\[
    \phi((X_1 : X_2 : X_3 : X_4)) = 
    \left(\phi_{1}(X_1,X_2,X_3,X_4)
    : 
    \cdots
    : \phi_{4}(X_1,X_2,X_3,X_4) \right)
    \,,
\]
where
\begin{align*}
    \phi_1(X_1,X_2,X_3,X_4) & = X_1 \left(c_{1}X_1^2+  c_2 X_2^2+  c_3 X_3^2+  c_4 X_4^2 \right) + c_5 X_2X_3X_4
    \,,
    \\
    \phi_2(X_1,X_2,X_3,X_4) & = X_2  \left(c_2 X_1^2+  c_{1} X_2^2+  c_4 X_3^2+  c_3 X_4^2 \right) + c_5 X_1X_3X_4
    \,,
    \\
    \phi_3(X_1,X_2,X_3,X_4) & = X_3  \left(c_3 X_1^2+  c_4 X_2^2+  c_{1} X_3^2+  c_2 X_4^2 \right) + c_5 X_1X_2X_4
    \,,
    \\
    \phi_4(X_1,X_2,X_3,X_4) & = X_4  \left(c_4 X_1^2+  c_3 X_2^2+  c_2 X_3^2+  c_{1} X_4^2 \right) + c_5 X_1X_2X_3
    \,,
\end{align*}
and $c_{i}$ are rational functions in the theta-null
constants $a,b,c,d$ defining \K 
and the coordinates of the generators of the kernel
(see~\cref{subsec:explicit-coeffs-33} for the explicit
expresssions).
This map can be evaluated with at most $88$ multiplications and $12$ squarings in
the field containing the theta constants and generator coordinates.

To illustrate the potential benefits of our formul\ae{}
in practical applications,
we give experimental results on cryptographic hash functions
based on chains of \((3,3)\)-isogenies,
as in~\cite{castryck2021multiradical} and~\cite{DecruKunzweiler23}.
In~\cref{sec:kernels} we present
\textsf{3DAC}: a three-dimensional
\emph{differential addition chain},
and use it
to construct $(N^k,N^k)$-isogeny kernels correctly, efficiently, and securely.
Combined with our $(3,3)$-isogeny formul\ae{},
this allows us to efficiently compute $(3^k,3^k)$-isogenies on fast Kummer surfaces.
The hash function we define in~\cref{sec:hashfunction} uses these
isogenies,
exploiting the efficient arithmetic of fast Kummer surfaces for the first time,
to gain speed-ups of between $8{\tt x}$ and $9{\tt x}$
over the Castryk--Decru hash function~\cite{castryck2021multiradical} 
and between $32{\tt x}$ and $34{\tt x}$ 
over the Decru--Kunzweiler hash function~\cite{DecruKunzweiler23}.

\subsection{Software.} The source code accompanying this paper is written in {\tt MAGMA}~\cite{MAGMA}, Python and SageMath~\cite{sagemath} and is publicly available under the MIT license. 
It is available at 
\begin{center}
    \url{https://github.com/mariascrs/KummerIsogenies}. 
\end{center}

\subsection*{Acknowledgements}
We thank Chris Nicholls for making the code accompanying his thesis available to us, which we used to obtain the formulae for $(2,2)$-isogenies presented in~\cref{sec:generalmethod}. 
We also thank Sam Frengley for many helpful conversations during the preparation of this paper. 
We are also grateful to the anonymous
reviewers, whose comments helped to significantly improve the article.
The first author was supported by UK EPSRC grant EP/S022503/1.
This work was supported by the HYPERFORM consortium,
funded by France through Bpifrance;
it also received funding from the France 2030 program,
managed by the French National Research Agency
under grant agreement No.~ANR-22-PETQ-0008 PQ-TLS,
and from the French National Research Agency
under grant agreement No.~ANR-19-CE48-0008 CIAO.

\section{
    Fast Kummer surfaces and their arithmetic
}
\label{sec:prelims}

Let \field be a perfect field---typically, a finite field or a number
field---of characteristic $p \neq 2, 3$, or $5$,
and fix an algebraic closure \fieldbar.
If \(\field = \FF_q\),
then we measure the time complexity of our algorithms
in terms of elementary operations in \(\FF_q\).
We let \Fqmult, \Fqsquare, and \Fqadd
denote the cost of a single multiplication, squaring,
and addition (or subtraction) in \(\FF_q\), respectively.

\subsection{Genus-2 curves and their Jacobians}
Every smooth genus-2 curve over \field is isomorphic to a curve of the
form $C: y^2 = f(x)$, where $f(x) \in \field[x]$ is a squarefree
polynomial of degree 6.

For our applications we may suppose that \(f(x)\) has all its roots in \field,
so (after, for example, mapping one root to \(0\), one to \(1\), and one to
\(\infty\), as in~\cite[\S 2.1]{ohashi2023rosenhain}) 
we can assume \(C\) has a \emph{Rosenhain model}
\[
    C \cong C_{\lambda, \mu, \nu}/\field: 
    y^2 = x(x-1)(x-\lambda)(x-\mu)(x-\nu)
    \quad
    \text{with }
    \lambda, \mu, \nu \in \field
    \,;
\]
the values \(\lambda\), \(\mu\), and \(\nu\)
are called \emph{Rosenhain invariants} of \(C_{\lambda,\mu,\nu}\).

Fix a prime \(N\) not divisible by \(\operatorname{char}(\field)\).
The $N$-torsion subgroup \(\J[N]\) of $\J$ is a $\ZZ/N\ZZ$-module of
rank 4,
that is,
$\J[N] \cong (\ZZ/N\ZZ)^4$;
and
the (canonical) principal polarisation on $\J$ induces a non-degenerate,
bilinear, and antisymmetric \emph{$N$-Weil pairing}
\[
    e_N: \J[N] \times \J[N] \rightarrow \mu_N
    \,.
\]

A subgroup $G \subseteq \J[N]$ is \emph{isotropic} if $e_N(P,Q) = 1$ for all $P,Q \in G$, 
and \emph{maximal isotropic} if it is not properly contained in any
isotropic subgroup of $\J[N]$.
Since \(N\) is prime,
if \(G\) is maximal isotropic then
$G \cong (\ZZ/N\ZZ)^2$; we say $G$ is an \emph{$(N,N)$-subgroup}.
If $G$ is a maximal isotropic subgroup of $\J[N]$, 
then the quotient isogeny of abelian varieties 
\[
    \Phi: \J \to A' := \J/G
\]
is an isogeny of \PPAV{}s:
that is,
there is a principal polarization on $A'$
that pulls back via $\Phi$ to \(N\) times the principal polarization on $\J$.
We say that $\Phi$ is an \emph{$(N,N)$-isogeny}.

Being a principally polarized abelian surface,
$A'$ is (as a \PPAV) the Jacobian of a genus 2 curve, say $\J'$,
or a product of elliptic curves $E'_1 \times E'_2$ 
(equipped with the product polarisation). 
The case \(A' = \J\) is the general case,
and the primary focus of this paper.

\subsection{Isogenies and Kummer surfaces}

The Kummer surface $\K$ of a Jacobian $\J$ is the image of the quotient map $\pi: \J \rightarrow \K = \J/\{\pm 1 \}$. 
Geometrically, it has a quartic model in $\PP^3$ with sixteen point
singularities, called \emph{nodes};
the nodes are the images in $\K$ of the 2-torsion points of $\J$,
since these are precisely the points fixed by $-1$. 

Any $(N,N)$-isogeny $\Phi: \J \rightarrow \J^\prime$ descends to a
morphism of Kummer surfaces $\phi: \K \rightarrow \K^\prime$, such that
the following diagram commutes:
\[\begin{tikzcd}
	{\J} && {\J^\prime} \\
	\\
	{\K} && {\K^\prime}
	\arrow["\Phi"', from=1-1, to=1-3]
	\arrow["\pi", from=1-1, to=3-1]
	\arrow["{\pi^\prime}"', from=1-3, to=3-3]
	\arrow["\phi", from=3-1, to=3-3]
\end{tikzcd}\]
Abusing terminology, we say a morphism $\phi$ of Kummer surfaces is an $(N,N)$-isogeny if it is
induced by an $(N,N)$-isogeny $\Phi$ between the corresponding Jacobians.

\subsection{Fast Kummer surfaces}\label{subsec:prelims:fastkummer}

Following Gaudry~\cite{gaudry}, fast Kummer surfaces
are defined by four \emph{fundamental theta
constants}, which can be computed from the
Rosenhain invariants of a genus-2 curve $C/\field$.
Given a hyperelliptic curve $C/\field$ with Rosenhain invariants 
$\lambda, \mu,\nu \in \field$, we define 
\emph{fundamental theta constants} $a,b,c,d \in \fieldbar$
and \emph{dual theta constants} as $A,B,C,D \in \fieldbar$ such that 
\begin{align*}
    A^2 &= a^2 + b^2 + c^2 + d^2, \ \ B^2 = a^2 + b^2 - c^2 - d^2, \\
    C^2 &= a^2 - b^2 + c^2 - d^2, \ \ D^2 = a^2 - b^2 - c^2 + d^2.
\end{align*}

The theta constants are
related to Rosenhain invariants through the relations
\[
    \lambda = \frac{a^2c^2}{b^2d^2}\,,
    \qquad
    \mu = \frac{c^2e^2}{d^2f^2}\,,
    \qquad
    \nu = \frac{a^2e^2}{b^2f^2}\,,
\]
where $e,f \in \fieldbar$ satisfy $e^2/f^2 = (AB+CD)/(AB-CD)$.

We define the \emph{fast} Kummer model \(\K\)
corresponding to \(\C\) as
\begin{equation}\label{eq:gaudrykummer}
    \begin{aligned}
      \K :  &\    X_1^4+X_2^4+X_3^4+X_4^4 - 2 E \cdot X_1X_2X_3X_4 -F\cdot (X_1^2X_4^2+X_2^2X_3^2) \\
              & - G\cdot (X_1^2X_3^2+X_2^2X_4^2)-H \cdot (X_1^2X_2^2+X_3^2X_4^2) = 0,
    \end{aligned}
\end{equation}
where \(X_1,X_2,X_3,X_4\) are coordinates on \(\PP^3\)
and the coefficients $E,F,G,H$ are rational functions in $a,b,c,d$, namely
\begin{equation}\label{eq:EFGH}
    \begin{aligned}
        E &:= 256abcdA^2B^2C^2D^2/(a^2d^2-b^2c^2)(a^2c^2-b^2d^2)(a^2b^2-c^2d^2),  \\
        F &:= (a^4-b^4-c^4+d^4)/(a^2d^2-b^2c^2),  \\
        G &:= (a^4-b^4+c^4-d^4)/(a^2c^2-b^2d^2),  \\
        H &:= (a^4+b^4-c^4-d^4)/(a^2b^2-c^2d^2).
    \end{aligned}
\end{equation}
This model of $\K$ is often referred to as the \emph{canonical}
parameterisation~\cite{qDSA}.
Note that $A^2$, $B^2$, $C^2$, and $D^2$
are linear combinations of $a^2$, $b^2$, $c^2$, and $d^2$,
so the equation of $\K$ is determined entirely by $a,b,c,d$;
in fact, \(\K\) is determined by the projective point $(a :  b :  c : d) \in \PP^3$.
The identity element on $\K$ is $\OO_\K = (a\colon b \colon c \colon d)$.

\subsection{Nodes of the Kummer surface}\label{subsec:nodes}
The \emph{nodes} of \(\K\) 
are the sixteen points
\begin{footnotesize}
\begin{align*}
    \label{eq:twotors}
    \OO_\K & = (a:b:c:d),
	&
    T_1 & = (a : b : -c : -d), 
    & 
    T_2 & = (a : -b : c : -d),
    & 
    T_3 & = (a : -b : -c : d),
    \\
    T_4 & = (b : a : d : c),
    & 
    T_5 & = (b : a : -d : -c),
    & 
    T_6 & = (b : -a : d : -c),
    & 
    T_7 & = (b : -a : -d : c),
    \\
    T_8 & = (c : d : a : b), 
    & 
    T_9 & = (c : d : -a : -b),
    & 
    T_{10} & = (c : -d : a : -b),
    &
    T_{11} & = (c : -d : -a : b),
    \\
    T_{12} & = (d : c : b : a),
    &
    T_{13} & = (d : c : -b : -a),
    &
    T_{14} & = (d : -c : b : -a),
    & 
    T_{15} & = (d : -c : -b : a).
\end{align*}
\end{footnotesize}
Each of the \(T_i\) is the image in \(\K\)
of a two-torsion point \(\widetilde{T}_i\) in \(\J[2]\).
Since \(\tilde{T}_i = -\widetilde{T}_i\),
the translation-by-\(\widetilde{T}_i\) map on \(\J\)
induces a morphism \(\sigma_i: \K \to \K\).
In fact, \(\sigma_i\) lifts to a linear map on \(\AffineSpace^4\): that is, it acts like a matrix on the coordinates
\((X_1,X_2,X_3,X_4)\) on \(\PP^3\).
Further, \(\sigma_i\) and \(\sigma_j\)
commute resp.~anticommute
if \(e_2(\widetilde{T}_i,\widetilde{T}_j) = 1\)
resp.~\(-1\).

In particular, if we define
\begin{align*}
    U_1 & := \operatorname{diag}(1,1,-1,-1),
    &
    U_2 & := \operatorname{diag}(1,-1,1,-1),
\end{align*}
and
\begin{align*}
    V_1 & := \begin{pmatrix}
        0 & 1 & 0 & 0
        \\
        1 & 0 & 0 & 0
        \\
        0 & 0 & 0 & 1
        \\
        0 & 0 & 1 & 0
    \end{pmatrix}
    \,,
    &
    V_2 & := \begin{pmatrix}
        0 & 0 & 0 & 1
        \\
        0 & 0 & 1 & 0
        \\
        0 & 1 & 0 & 0
        \\
        1 & 0 & 0 & 0
    \end{pmatrix}
    \,.
\end{align*}
Then
\begin{equation}
    \label{eq:U-V-relations-1}
    U_1^2 = U_2^2 = I_4, \text{ and } U_1U_2 = U_2U_1
    \,,
    \qquad
    V_1^2 = V_2^2 = I_4, \text{ and } V_1V_2 = V_2V_1
    \,,
\end{equation}
and
\begin{equation}
    \label{eq:U-V-relations-2}
    U_1V_2 = -V_2U_1
    \,,
    \qquad
    U_2V_1 = -V_1U_2
    \,,
    \qquad
    U_1V_1 = V_1U_1
    \,,
    \qquad
    U_2V_2 = V_2U_2
    \,.
\end{equation}
Taking the labelling of the nodes above,
with \(T_0 = (a:b:c:d)\) as the image of \(\widetilde{T}_0 = \OO_\J\),
the corresponding translations are
such that
\[
    T_i = \sigma_i((a:b:c:d)) 
    \qquad
    \text{for }
    0 \le i \le 15
    \,;
\]
that is,
\begin{align*}
    \sigma_0    & = I_4
    & 
    \sigma_1    & = U_1
    & 
    \sigma_2    & = U_2
    & 
    \sigma_3    & = U_1U_2
    \\
    \sigma_4    & = V_1 
    &
    \sigma_5    & = V_1U_1
    &
    \sigma_6    & = V_1U_2
    &
    \sigma_7    & = V_1U_1U_2
    \\
    \sigma_8    & = V_1V_2
    &
    \sigma_9    & = V_1V_2U_1
    &
    \sigma_{10} & = V_1V_2U_2
    &
    \sigma_{11} & = V_1V_2U_1U_2
    \\
    \sigma_{12} & = V_2
    &
    \sigma_{13} & = V_2U_1
    &
    \sigma_{14} & = V_2U_2
    &
    \sigma_{15} & = V_2U_1U_2
\end{align*}
Now Eqs.~\eqref{eq:U-V-relations-1}
and~\eqref{eq:U-V-relations-2}
show that
\((\widetilde{T}_1, \widetilde{T}_2, \widetilde{T}_{12}, \widetilde{T}_4)\)
is a symplectic basis (with repesct to the \(2\)-Weil pairing) of~\(\J[2]\), 
where we define a symplectic basis as follows. 

\begin{definition}\label{def:symplectic}
    Let $\J$ be the Jacobian of a genus $2$ curve $\C$. 
    We say that a basis $\{Q_1, Q_2, Q_3, Q_4\}$ for $\J[D]$ 
    is \emph{symplectic} with respect to the $D$-Weil pairing
    if 
	\[
		e_D(Q_1, Q_3) = e_D(Q_2, Q_4) = \zeta
	\]
	where $\zeta$ is a primitive $D$-th root of unity, and $e_D(Q_i, Q_j) = 1$ otherwise.
\end{definition}

\subsection{Operations on the Kummer surface}\label{sub:operations}
Let $\pi: \J \to \K$ be the quotient by $-1$.  
The multiplication-by-$m$ maps $[m]$ on $\J$ induce 
\emph{pseudo-multiplications} $\pi(P) \mapsto [m]_*(\pi(P)) = \pi([m]P)$.  
We can express the pseudo-doubling map $[2]_*$ on $\K$ as a composition 
of four basic building blocks,
each a morphism from \(\PP^3\) to \(\PP^3\):
\begin{enumerate}
    \item
        the \emph{Hadamard involution} $\mathcal{H}: \PP^3 \to \PP^3$,
        which is induced by the linear map on \(\AffineSpace^4\) defined by 
        the matrix
        \[
            \left(
            \begin{array}{rrrr}
                1  & 1  &  1 &  1
                \\
                1  & 1  & -1 & -1
                \\
                1  & -1 &  1 & -1
                \\
                1  & -1 & -1 &  1
            \end{array}
            \right)
            \,;
        \]
    \item
        the \emph{squaring} map
        \[
            \mathcal{S}: (X_1:X_2:X_3:X_4)
            \longmapsto
            (X_1^2:X_2^2:X_3^2:X_4^2)
            \,;
        \]
    \item
        the \emph{scaling} maps 
        \[
            \mathcal{C}_{(\alpha : \beta : \gamma : \delta)}
            : 
            (X_1 : X_2 : X_3 : X_4) 
            \longmapsto 
            (\alpha X_1 : \beta X_2 : \gamma X_3 : \delta X_4)
        \]
        for each \((\alpha:\beta:\gamma:\delta) \in \PP^3(\field)\);
        and
    \item
        the \emph{inversion} map 
        \begin{align*}
            \mathcal{I}: 
            (X_1 : X_2 : X_3 : X_4) 
            \longmapsto 
            & \ 
            (X_2X_3X_4: X_1X_3X_4: X_1X_2X_4: X_1X_2X_3)
            \\
            =
            & \ 
            (1/X_1 : 1/X_2 : 1/X_3 : 1/X_4)
            \,,
        \end{align*}
        well-defined when all $X_i \neq 0$.
\end{enumerate}
We can readily see that 
$\mathcal{H}$ costs 8 \field-additions,
$\mathcal{S}$ costs 4 \field-squarings,
$\mathcal{C}$ costs 4 \field-multiplications,
and
$\mathcal{I}$ costs 6 \field-multiplications.
Now, 
if $\K$ is a Kummer surface with fundamental theta constants $(a :  b :  c :  d)$,
then pseudo-doubling is given by 
\[
    [2]_*
    = 
    \mathcal{C}_{\mathcal{I}(\mathcal{O}_{\K})} \circ \mathcal{H} \circ \mathcal{S} \circ \mathcal{C}_{\mathcal{I}((A:  B :  C :  D))} \circ \mathcal{H} \circ \mathcal{S}
    \,.
\]

While \K inherits scalar multiplication from \J,
it loses the group law:
$\pi(P) = \pm P$ and $\pi(Q) = \pm Q$ in $\K$
do not uniquely determine
$\pi(P+Q) = \pm(P+Q)$ (unless at least one of \(P\) and \(Q\) is in \(\J[2]\)).
However, 
the operation $\{ \pi(P), \pi(Q) \} \mapsto \{ \pi(P+Q), \pi(P-Q) \}$ is well-defined,
so we have
a \emph{pseudo-addition} operation
$(\pi(P), \pi(Q), \pi(P-Q))\mapsto\pi(P+Q)$.

By abuse of notation, we let $R=(r_1 \colon r_2 \colon r_3 \colon r_4)$ and $S=(s_1 \colon s_2 \colon s_3 \colon s_4)$ be points on $\K$,
and let $T^{+}=(t_1^+ \colon t_2^+ \colon t_3^+ \colon t_4^+)$ and $T^{-}=(t_1^- \colon t_2^- \colon t_3^- \colon t_4^-)$ denote the sum $R+S$ and difference $R-S$, respectively.  
There exist biquadratic forms $B_{ij}$~\cite[Theorem 3.9.1]{cassels1996prolegomena} for $\K$ such that for $1 \leq i,j \leq 4$ we have
$$
    t^{+}_i t^{-}_j + t^{-}_i t^{+}_j 
    = 
    \lambda B_{ij}\big(r_1,r_2,r_3,r_4; s_1,s_2,s_3,s_4\big) 
    = 
    \lambda B_{ij}(R; S),
$$
where $\lambda \in \fieldbar$ is a common projective factor depending only on the affine representations chosen for $R$, $S$, $T^{+}$, $T^-$. 
The biquadratic forms \(B_{i,j}\) for fast Kummer surfaces
are given explicitly by Renes and Smith~\cite[\S 5.2]{qDSA}. 

These biquadratic forms are the basis of explicit pseudo-addition and
doubling laws on $\K$. For example, if the difference $T^{-}$ is known,
then the \(B_{ij}\) can be used to compute the coordinates of $T^{+}$.
As we will see in~\cref{sec:generalmethod}, the \(B_{ij}\)
will also be crucial in determining equations for our $(N, N)$-isogenies. 

\section{
    \texorpdfstring{$(N,N)$}{(N,N)}-isogenies on fast Kummer surfaces
}
\label{sec:generalmethod}

Throughout this section,
$\Phi: \J = \Jac(C) \to \J^\prime$ 
is
an $(N,N)$-isogeny with kernel $G \subset \J[N]$ 
(a maximal $N$-Weil isotropic subgroup of $\J[N]$),
where $N$ is 
a prime number not equal to the characteristic of the base field $\field$.
Our goal is to compute an explicit and efficiently-computable collection
of polynomials defining the induced map
$\phi: \K \to \K^\prime$
when $\K$ and $\K^\prime$ admit \emph{fast} models.

\subsection{A warm-up with \texorpdfstring{$N = 2$}{N = 2}}\label{subsec:warmup-22}

We first dispose of the case \(N = 2\).
Let $\widetilde{T}_0, \dots, \widetilde{T}_{15}$ be the 16 points in $\J[2]$, and let $T_0, \dots, T_{15}$ be their images in $\K$. Recall that $T_i = \sigma_i(T_0)$, where $\sigma_i: \K \rightarrow \K$ is a morphism defining the translation-by-$T_i$ map. 
There are precisely fifteen (images of) $(2,2)$-subgroups in $\K$,
and they are the images of
\[
    \widetilde{G}_{ij} 
    := 
    \big\{
        \widetilde{T}_0, 
        \ \widetilde{T}_i,
        \ \widetilde{T}_j,
        \ \widetilde{T}_i + \widetilde{T}_j
    \big\}
    \subset
    \J[2]
\]
in $\K$, where $1 \leq i\neq j \leq 15$ such that the linear maps
corresponding to $\sigma_i$ and $\sigma_j$ commute.
This gives 15 corresponding $(2,2)$-isogenies given by $\phi: \K\rightarrow \K' = \K/G_{i,j}$.
For each unique $(2,2)$-subgroup, we can associate a morphism $\alpha: \K \rightarrow \K$ induced by a linear map on $\mathbb{A}^4$ such that the corresponding $(2,2)$-isogeny $\K \rightarrow \widetilde{\K}$ is given by 
\[
    \psi := \mathcal{H} \circ \mathcal{S} \circ \alpha.
\]
We give matrices specifying the linear map $\alpha$ for each of the $(2,2)$-subgroups in~\cref{appendix:2-2}. For now, it suffices to note that the nonzero entries of these $4\times 4$ matrices are all fourth roots of unity.

Note however, that $\widetilde{\K}$ will not be in the correct form, as given by Equation~\eqref{eq:gaudrykummer}. We must therefore apply a final scaling 
$\mathcal{C}_{U}$ where $U = (\mathcal{S}^{-1}\circ \mathcal{I}\circ \psi)(\OO_{\K})$.
From this we obtain a $(2,2)$-isogeny $\phi := \mathcal{C}_{U}\circ \psi : \K \rightarrow \K' = \K/G_{i,j}$, where $\K$ and $\K'$ are fast Kummer surfaces. 

\begin{example}
    Consider the $(2,2)$-subgroup 
    \[
    G_{1,2} = \{ (a \colon b \colon c \colon d), (a \colon b \colon -c \colon -d), (a \colon -b \colon c \colon -d), (a \colon -b \colon -c \colon d)\}.
    \]
    Here $\alpha$ is the identity map and the $(2,2)$-isogeny is given by 
    \begin{equation*}
        \begin{aligned}
            (X_1 \colon X_2 \colon &X_3 \colon X_4) \mapsto \\
            &\Bigg(\frac{X_1^2+X_2^2+X_3^2+X_4^2}{A} \colon \frac{X_1^2+X_2^2-X_3^2-X_4^2}{B} \colon \\  
            & \hspace{25ex} \colon\frac{X_1^2-X_2^2+X_3^2-X_4^2}{C} \colon \frac{X_1^2-X_2^2-X_3^2+X_4^2}{D}\Bigg)
        \end{aligned}
    \end{equation*}
    We call this the \emph{distinguised kernel}: the kernel of the first half of doubling. Indeed, $U = (A: B : C : D)$ 
    and we recover the first three steps $\mathcal{C}_{\mathcal{\mathcal{I}}(A: B : C : D)} \circ \mathcal{H} \circ \mathcal{S}$ of the doubling map on fast Kummer surfaces. 
\end{example}

\begin{remark}
    Though the formul\ae{} for $(2,2)$-isogenies on fast Kummer surfaces are extremely 
    compact, we remark that the final scaling requires the computation of square roots 
    in \fieldbar. If $R$, $S$ are the $2$-torsion points generating the isogeny, 
    let $P$, $Q\in \K$ be such that $R = [2]_*P$, $S = [2]_*Q$. 
    The coordinates of $P$, $Q$ contain the square roots needed for the final scaling 
    (up to a projective factor). 
    We therefore suspect that these square roots may be inferred directly from coordinates 
    of the $4$-torsion, though we have not been able to derive such formul\ae{}.
\end{remark}

\subsection{The general case: odd \texorpdfstring{\(N\)}{N}}

From this point forward, we suppose \(N\) is an odd prime.
Since \(N\) is odd,
the \((N,N)\)-isogeny \(\Phi\) restricts to an isomorphism of
\(2\)-torsion subgroups \(\J[2]\to\J'[2]\). Furthermore, since \(\Phi\) is an isogeny of \PPAV
it is compatible with the \(N\)-Weil pairing by definition,
and so it maps the symplectic structure on \(\J[2]\) 
associated with the fast Kummer \(\K\)
onto a symplectic \(2\)-torsion structure on \(\J'[2]\),
which is associated with a fast Kummer \(\K'\).
The isogeny \(\Phi\)
therefore descends to a morphism
\(\phi: \K \to \K'\) of fast Kummers. Our goal is to construct explicit equations for \(\phi\).

To do so, we follow the strategy
taken by Cassels and Flynn~\cite[\S 9]{cassels1996prolegomena}, Bruin,
Flynn and Testa~\cite{bruin2014descent}, Nicholls~\cite[\S
5]{nicholls2018descent}, and Flynn~\cite{flynn2015descent},
adapting it to the case of \emph{fast} Kummer surfaces.
We will observe that the special forms of the affine translation maps
\(\sigma_i\) are very helpful in this setting,
and lead to nice results.

In practice,
we are given a fast Kummer \(\K\)
and the image \(\pi(G)\) of an \((N,N)\)-subgroup \(G\) of \(\J\) in~\(\K\).
There exists a fast Kummer
\(\K' \cong (\J'/G)/\langle{\pm1}\rangle\),
and our goal is to find \(\K'\) and the map \(\phi:\K\to\K'\) induced by the
quotient \((N,N)\)-isogeny \(\Phi\) with kernel \(G\).
Crucially, \(\phi\) ``commutes'' with the action by
\(2\)-torsion points, in the sense of the following definition.

\begin{definition}
    \label{def:fast-isogeny}
    An \emph{isogeny of fast Kummer surfaces}
    is a morphism \(\phi: \K \to \K'\)
    induced by an isogeny \(\Phi: \J \to \J'\)
    such that when lifted to a map on the ambient space,
    we have
    \[
        \phi\circ U_i^{\K} = U_i^{\K'}\circ \phi
        \quad
        \text{and}
        \quad
        \phi\circ V_i^{\K} = V_i^{\K'}\circ \phi
        \quad
        \text{for }
        i = 1, 2
        \,,
    \]
    where $U_i$ and $V_i$ are as defined in~\cref{subsec:nodes}.
\end{definition}

We want to compute \(\phi: \K \to \K'\),
but \(\K'\) is unknown.
However, as \(\K\) and \(\K'\) are both embedded in \(\PP^3\), \(\phi\) must be defined by forms of degree \(N\),
and it must commute with the actions of the \(U_i\) and \(V_i\).
This imposes heavy constraints on the shape of the forms defining \(\phi\),
and we can hope to interpolate them
using linear algebra given the action of \(G\),
and therefore to interpolate the image Kummer \(\K'\)
by pushing the theta constants $(a \colon b \colon c \colon d)$ through the isogeny.

Let $\K[N]$ be the image of $\J[N]$ on $\K$ and fix $R,S \in \K[N]$. From this point forward, we write $\langle R,S\rangle \subset \K[N]$ for the image of the subgroup $G$ of $\J[N]$ generated by the preimages of $R,S$. By abuse of notation, we say that $R$, $S$ are $N$-torsion points on $\K$.

The first step is to compute two sets of homogeneous forms of degree $N$ in the coordinates of $\K$ that are invariant under translation by $R$ and by $S$. 
The following lemma, due to Nicholls~\cite[\S 5.8.4]{nicholls2018descent}, describes how we can use the biquadratic forms associated to the Kummer surface introduced in~\cref{subsec:prelims:fastkummer} to construct these homogeneous forms. 

\begin{lemma}\label{lemma:invaraintforms}
    Fix Kummer surface $\K$ with coordinates $X_1, X_2, X_3, X_4$ 
    and associated biquadratic forms $B_{i,j}$ for $1 \leq i,j \leq 4$.
    Let \(N\) be an odd prime number,
    and fix a point $R \in \K[N]$ of order $N$.

    We denote by $I \in \{1,2,3,4\}^N$ a list of indices 
    $I = (i_1, \dots, i_N)$. 
    Letting $\tau$ be a permutation of $\{1, \dots, N\}$, we write $$\tau(I) =  \tau((i_1, \dots, i_N)) := (i_{\tau(1)}, \dots, i_{\tau(N)}).$$
    Then, for each $I \in \{1,2,3,4\}^N$ we define
    \[
        F_I 
        := 
        \sum_{\tau \in C_N} X_{i_{\tau(1)}}\cdot \prod_{k=1}^{(N-1)/2}B_{i_{\tau(2k)}, i_{\tau(2k+1)}}(X_1, X_2, X_3, X_4; \ kR)
        \,,
    \]
    where $C_N$ is the cyclic group of order $N$.
    Then, the set $\mathcal{F}_R := \{ F_I \}$ contains homogeneous forms of degree $N$ invariant under translation by $R$.
\end{lemma}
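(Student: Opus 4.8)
The plan is to verify directly that each \(F_I\) is homogeneous of degree \(N\) and that it is fixed (up to a scalar) by the translation-by-\(R\) map \(\sigma_R\) on \(\K\). Homogeneity is immediate from the construction: each summand is a product of one linear form \(X_{i_{\tau(1)}}\) with \((N-1)/2\) biquadratic forms \(B_{i_{\tau(2k)},i_{\tau(2k+1)}}(X_1,\dots,X_4; kR)\), each of which has degree \(2\) in the \(X\)-variables (the dependence on \(kR\) being constant), giving total degree \(1 + 2\cdot(N-1)/2 = N\). So the substance is translation-invariance.

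The key input is the biquadratic identity recalled in~\cref{sub:operations}: for points \(P, Q\) on \(\K\) with difference \(P - Q\), one has
\[
    t^+_i t^-_j + t^-_i t^+_j = \lambda\, B_{ij}(P; Q),
\]
where \(t^\pm\) are the coordinates of \(P \pm Q\). Specialising \(Q = kR\) (a fixed \(N\)-torsion point) and letting \(P\) vary over \(\K\), the form \(X \mapsto B_{ij}(X; kR)\) interpolates the symmetric combination of the coordinates of \(P + kR\) and \(P - kR\). The strategy I would follow is to think of \(F_I\), evaluated at a generic point \(P\), as a symmetric function of the orbit \(\{P, P \pm R, P \pm 2R, \dots\}\): the cyclic sum over \(\tau \in C_N\) should be engineered precisely so that applying \(\sigma_R\) (i.e. replacing \(P\) by \(P + R\)) permutes the terms of the sum cyclically, hence fixes \(F_I\) up to the projective scalar coming from the common factor \(\lambda\) in each biquadratic evaluation. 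Concretely, I would set up the telescoping: translating \(P \mapsto P+R\) sends the coordinate \(X_{i_{\tau(1)}}\) (evaluated at \(P\)) and the chain of biquadratic factors \(B_{\cdots}(P; kR)\) to the corresponding data at \(P + R\), and one checks that the product structure — one ``anchor'' coordinate and a pairing of the remaining indices into biquadratics with shifts \(R, 2R, \dots, ((N-1)/2)R\) — is exactly what is needed for the shifted term to coincide with another term indexed by a different \(\tau \in C_N\). Summing over all of \(C_N\) then gives \(\sigma_R^* F_I = (\text{scalar})\cdot F_I\), which is the claim.

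The main obstacle, and the step I would spend the most care on, is making the bookkeeping of the cyclic sum precise: one must verify that the assignment \(\tau \mapsto (\text{shifted } \tau)\) induced by \(P \mapsto P+R\) is a well-defined bijection on \(C_N\), that the biquadratic shifts \(kR\) line up correctly under this bijection (using \(NR = 0\) so that \((N-1)/2\)-many shifts ``wrap around'' consistently), and that the scalar factors \(\lambda\) from the various \(B_{ij}\) evaluations combine into a single global scalar independent of \(\tau\) — this last point being what allows the conclusion to hold projectively on \(\K\) rather than merely up to term-by-term rescaling. I would also need to confirm that \(F_I\) is not identically zero for a suitable choice of \(I\) (otherwise the statement, while true, is vacuous), but since the lemma only asserts that \(\mathcal{F}_R\) \emph{contains} invariant forms of degree \(N\), a dimension or genericity argument — or simply exhibiting one nonzero \(F_I\) in the \(N = 3\) case to be treated later — suffices. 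Since \(\mathcal{F}_R\) is manifestly stable under the \(\PP^3\)-coordinate relabelling that realises the \(U_i, V_i\) action, the resulting space of forms is also compatible with the symmetry constraints of \cref{def:fast-isogeny}, which is what makes it the right object for the interpolation step.
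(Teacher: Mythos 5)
The paper does not prove this lemma itself (it defers to Nicholls' thesis), so your proposal has to stand on its own. Your overall strategy --- use the biquadratic identity to view $F_I(P)$ as a symmetric function of the coordinates of the orbit $P,\ P\pm R,\ \dots,\ P\pm\tfrac{N-1}{2}R$, and exploit the fact that translation by $R$ permutes this orbit --- is the right one, and your subsidiary observations are sound: homogeneity of degree $N$ is immediate, and the projective scalars do combine into a single global factor because every monomial in the expansion uses exactly one coordinate from each of the $N$ points $P+mR$.

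The gap is precisely the step you flag as the crux: the claim that replacing $P$ by $P+R$ sends each term of the sum over $C_N$ to another term of the same sum. This holds for $N=3$ but fails for $N\ge 5$. Expanding the $\tau$-term of $F_I$ gives monomials $\prod_{m\in\ZZ/N\ZZ}x^{(m)}_{j(m)}$, where $x^{(m)}$ denotes coordinates of $P+mR$, the anchor index sits at position $0$, and the position pair $\{k,-k\}$ carries the index pair $\{i_{\tau(2k)},i_{\tau(2k+1)}\}$. Translation by $R$ shifts positions, so afterwards the new position pair $\{1,-1\}$ carries the indices that previously sat at positions $\{0,2\}$ --- positions that are \emph{not} paired inside any single biquadratic factor of any term of the $C_N$-sum. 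The translated expansion therefore does not reassemble into $F_I$; it reassembles only into the full symmetrization $\sum_{\sigma\in S_N}X_{i_{\sigma(1)}}\prod_k B_{i_{\sigma(2k)},i_{\sigma(2k+1)}}(\,\cdot\,;kR)$, which is what Flynn and Nicholls actually work with and which equals $\sum_{\sigma}F_{\sigma(I)}$ over coset representatives of $C_N$ in $S_N$. (Compare the paper's own $(5,5)$ forms $F^R_{ijklm}$, which have $30$ terms rather than the $5$ a cyclic sum would give.) For $N=3$ the cyclic sum coincides with the full symmetrization because $B_{ij}=B_{ji}$ and there is only one position pair $\{1,-1\}$, so your argument is completable in the case the paper actually uses; for general odd $N$ you must either replace $C_N$ by $S_N$, or prove only that the invariant symmetrizations lie in the span of $\mathcal{F}_R$ (with $F_I$ itself invariant for sufficiently symmetric $I$), which is all the downstream linear algebra requires.
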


Applying the lemma above to $N$-torsion points $R$ and $S$, we obtain the two sets $\mathcal{F}_R$ and $\mathcal{F}_S$. 
The homogeneous forms of degree $N$ in each set will generate a space of dimension $m_N \geq 4$. 
Our experiments suggest $m_N = 2N+2$ for $N \leq 19$, and possibly beyond, 
though we have not proven this.

The next step is to compute a basis for these two spaces, say 
$F^R_1, \dots, F^R_{m_N}$ is a basis for the space generated by the homogeneous forms in $\mathcal{F}_R$, and $F^S_1, \dots, F^S_{m_N}$ a basis for the space generated by $\mathcal{F}_S$.

The intersection of these spaces contains homogeneous forms of degree $N$ that are invariant under translation by any point in the kernel $G$ of our $(N,N)$-isogeny. The intersection will be of dimension 4, and a basis for this intersection gives an $(N,N)$-isogeny $\psi: \K \rightarrow \widetilde{\K}$. Explicitly, the third step is to compute a basis of this 
intersection, say $f_1$, $f_2$, $f_3$, $f_4$. Then, our $(N,N)$-isogeny is given by $\psi = (f_1 : f_2 : f_3 : f_4)$. 

We note that $\widetilde{\K}$ may not be in the correct form given by Equation~\eqref{eq:gaudrykummer}. When computing chains of isogenies, however, it is important to ensure that our $(N,N)$-isogenies have domain and image in the same form.
Therefore, the last step is to apply a linear transformation $\textbf{M}: \widetilde{\K} \rightarrow \K^\prime$, where $\K'$ is a fast Kummer surface. Post-composing the map $\psi$ with this linear transformation gives a $\fieldbar$-rational $(N,N)$-isogeny $\phi: \K \rightarrow \K^\prime$ between fast Kummer surfaces generated by kernel $G = \langle R,S \rangle$.

\begin{remark}
    The compactness and efficiency of our isogeny formulae is determined by the choice of basis we make for the spaces generated by the forms in $\mathcal{F}_R$ and $\mathcal{F}_S$. An open question that arises from this work, therefore, is finding a solution to the following problem: let $f_1, \dots, f_n \in \QQ(a_1, \dots, a_k)[x_1, \dots, x_m]$ be a basis of polynomials defined over a function field. Find a ``nice'' basis $g_1, \dots, g_n$ where $g_1,\dots,g_n$ are $\QQ(a_1, \dots, a_k)$-linear combinations of the $f_i$. 
\end{remark}

\section{
    Explicit \texorpdfstring{$(3,3)$}{(3,3)}-isogenies on fast Kummers
}
\label{sec:3-3}

We now specialise the discussion in~\cref{sec:generalmethod} to $N = 3$ to construct $(3,3)$-isogenies between fast Kummer surfaces. 

Let $\J = \Jac(C)$ be the Jacobian of a genus 2 curve $C$ defined over $\fieldbar$.  
Suppose we have a $(3,3)$-subgroup of $\J[3]$, which induces an 
isogeny $\phi$ on the corresponding fast Kummer surface $\K = \J/\{\pm 1\}$ with kernel $G = \langle R,S \rangle$ for some $R,S \in \K[3]$ (i.e., $G$ is the image of the $(3,3)$-subgroup in $\K$).

Exploiting the fact that $\phi$ is an isogeny of fast Kummer surfaces, we obtain the following lemma, demonstrating that it is determined by five $\fieldbar$-rational functions in the coordinates of $\OO_{\K} = (a\colon b \colon c \colon d)$, $R$ and $S$.  
\begin{lemma}\label{lemma:33iso-form}
    Let $R$ and $S$ be distinct $3$-torsion points on $\K$ generating a $(3,3)$-subgroup $G \subset \K[3]$, and set $\OO_{\K} = (a \colon b \colon c \colon d)$.
    The $(3,3)$-isogeny of fast Kummer surfaces $\phi \colon \K \to \K^\prime$ generated by kernel $G$
    is in the form
    \begin{equation*}
        (X_1 : X_2 : X_3 : X_4)
        \longmapsto 
        \left(
            \phi_{1}(X_1,X_2,X_3,X_4) 
            : \cdots
            : \phi_{4}(X_1,X_2,X_3,X_4)
        \right)
        \,,
    \end{equation*}
    where 
    \begin{align*}
        \phi_1(X_1,X_2,X_3,X_4) & = X_1 \left(c_{1}X_1^2+  c_2 X_2^2+  c_3 X_3^2+  c_4 X_4^2 \right) + c_5 X_2X_3X_4
        \,,
        \\
        \phi_2(X_1,X_2,X_3,X_4) & = X_2  \left(c_2 X_1^2+  c_{1} X_2^2+  c_4 X_3^2+  c_3 X_4^2 \right) + c_5 X_1X_3X_4
        \,,
        \\
        \phi_3(X_1,X_2,X_3,X_4) & = X_3  \left(c_3 X_1^2+  c_4 X_2^2+  c_{1} X_3^2+  c_2 X_4^2 \right) + c_5 X_1X_2X_4
        \,,
        \\
        \phi_4(X_1,X_2,X_3,X_4) & = X_4  \left(c_4 X_1^2+  c_3 X_2^2+  c_2 X_3^2+  c_{1} X_4^2 \right) + c_5 X_1X_2X_3
        \,,
    \end{align*}
    with $c_{i}\in \fieldbar[a,b,c,d,r_1,r_2,r_3,r_4,s_1,s_2,s_3,s_4]$. 
\end{lemma}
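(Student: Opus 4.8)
The plan is to pin down the shape of $\phi$ using exactly two structural facts: $\phi$ is given by forms of degree $N=3$ in $X_1,\dots,X_4$, and $\phi$ commutes with the action of $U_1,U_2,V_1,V_2$ in the sense of \cref{def:fast-isogeny}. First I would write down the generic degree-$3$ form in four variables (it has $\binom{6}{3}=20$ monomials), and impose the relations $\phi_k\circ U_i = U_i\circ\phi_k$ for $i=1,2$. Since $U_1=\operatorname{diag}(1,1,-1,-1)$ and $U_2=\operatorname{diag}(1,-1,1,-1)$ act on coordinates by signs, commuting with both $U_i$ forces each $\phi_k$ to be a sum of monomials whose multidegree lies in a fixed coset of $(\mathbb{Z}/2)^2$ (the coset determined by which coordinate $X_k$ transforms like). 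Concretely, $\phi_1$ must be a combination of $X_1^3, X_1X_2^2, X_1X_3^2, X_1X_4^2$ and $X_2X_3X_4$ — exactly the five monomials appearing in the statement — and similarly for $\phi_2,\phi_3,\phi_4$ with the appropriate monomials. This already reduces the $20$-dimensional space of forms to a $5$-dimensional space for each $\phi_k$.

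Next I would impose commutation with $V_1$ and $V_2$. The matrix $V_1$ swaps $X_1\leftrightarrow X_2$ and $X_3\leftrightarrow X_4$, while $V_2$ swaps $X_1\leftrightarrow X_4$ and $X_2\leftrightarrow X_3$; together these generate the action of the Klein four-group $\{1,V_1,V_2,V_1V_2\}$ permuting the four coordinates in the pattern visible in the node table of \cref{subsec:nodes}. The condition $\phi\circ V_i = V_i\circ\phi$ then says that $\phi_2 = \phi_1\circ V_1$ (reading off the first coordinate after applying $V_1$ on both sides), $\phi_4 = \phi_1\circ V_2$, and $\phi_3 = \phi_1\circ V_1V_2$. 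So all four forms are determined by $\phi_1 = c_1 X_1^3 + c_2 X_1X_2^2 + c_3 X_1 X_3^2 + c_4 X_1 X_4^2 + c_5 X_2X_3X_4$ via these substitutions, and applying $V_1$, $V_2$, $V_1V_2$ to this expression produces precisely the permuted coefficient patterns $(c_2,c_1,c_4,c_3)$, $(c_4,c_3,c_2,c_1)$, $(c_3,c_4,c_1,c_2)$ displayed in $\phi_2,\phi_3,\phi_4$. (One should check the compatibility of the $V_1$- and $V_2$-constraints — i.e.\ that $(\phi_1\circ V_1)\circ V_2 = (\phi_1\circ V_2)\circ V_1$ up to the $V_1V_2$ substitution — which holds because $V_1V_2=V_2V_1$ by \eqref{eq:U-V-relations-1}.) This establishes the claimed form with five free coefficients $c_1,\dots,c_5$.

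Finally I would argue that the $c_i$ lie in $\fieldbar[a,b,c,d,r_1,r_2,r_3,r_4,s_1,s_2,s_3,s_4]$. By the interpolation procedure of \cref{sec:generalmethod}, $\phi$ is obtained as a basis of the intersection of the spaces spanned by $\mathcal{F}_R$ and $\mathcal{F}_S$; by \cref{lemma:invaraintforms} the forms $F_I$ have coefficients that are polynomial in the coordinates of $\K$ and in the biquadratic forms $B_{i,j}$ evaluated at multiples $kR$ of $R$ (resp.\ $kS$), and the $B_{i,j}$ for fast Kummers are polynomials in the theta constants and the coordinates of their arguments (Renes–Smith). For $N=3$ only $R$, $2R=-R$ and $S$, $2S=-S$ occur, and negation on $\K$ is trivial, so no inversions or square roots enter; hence the entries of $\mathcal{F}_R$, $\mathcal{F}_S$, and therefore a suitable choice of basis $c_1,\dots,c_5$ of the relevant coordinate functions, lie in the stated polynomial ring. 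The main obstacle is the last point: one must make sure that solving the linear-algebra problem that cuts out the $4$-dimensional intersection (and then reading off the coefficients in the normalized shape above) can be done without dividing — i.e.\ that the normalization is by a nonzero constant absorbed projectively rather than by a polynomial. In the write-up I would handle this by fixing the projective scaling of $\phi$ so that, say, the $X_1^3$-coefficient of $\phi_1$ equals the value forced by the interpolation (clearing a common denominator once), which keeps everything polynomial; the explicit expressions deferred to \cref{subsec:explicit-coeffs-33} then confirm that this is exactly what happens.
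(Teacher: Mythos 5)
Your proposal is correct and follows essentially the same route as the paper: start from the generic $20$-monomial cubic forms, impose compatibility with the translations by $2$-torsion (the $U_i$ and $V_i$) to force the five-coefficient symmetric shape, and clear denominators to place the $c_i$ in the stated polynomial ring. The only difference is that the paper delegates the resulting symbolic elimination to an accompanying \texttt{MAGMA} script, whereas you carry out the $U_i$-parity and $V_i$-permutation constraints by hand, and your hand computation is accurate.
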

\begin{proof}
    By \Cref{lemma:invaraintforms}, the isogeny $\phi$ is given by cubic forms.
    That is,
    $\phi$ is defined by polynomials
    \begin{align*}
        \phi_i = \  & c_{i,1}X_1^3 + c_{i,2}X_1X_2^2 + c_{i,3}X_1X_3^2 + c_{i,4}X_1X_4^2 + c_{i,5}X_2X_3X_4
        \\
        \quad + \ & c_{i,6}X_2X_1^2 + c_{i,7}X_2^3 + c_{i,8}X_2X_3^2 + c_{i,9}X_2X_4^2 + c_{i,10}X_1X_3X_4
        \\
        \quad + \ & c_{i,11}X_3X_1^2 + c_{i,12}X_3X_2^2 + c_{i,13}X_3^3 + c_{i,14}X_3X_4^2 + c_{i,15}X_1X_2X_4
        \\
        \quad + \ & c_{i,16}X_4X_1^2 + c_{i,17}X_4X_2^2 + c_{i,18}X_4X_3^2 + c_{i,19}X_4^3 + c_{i,20}X_1X_2X_3
        \,,
    \end{align*} 
    where $c_{i,j}$ are $\fieldbar$-rational functions in the
    coordinates of $\OO_{\K}$, $R$, and $S$ for $1 \leq i \leq 4$ and $1 \leq j \leq 20$.
    We are looking for an isogeny of fast Kummer surfaces in the sense
    of Definition~\ref{def:fast-isogeny},
    and compatibility with the translations by \(2\)-torsion thus forces
    \begin{align}
        \label{eq:lemma41}
        \sigma'_i((\phi_1 \colon \phi_2 \colon \phi_3 \colon \phi_4)) = \phi(\sigma_i(X_1 \colon X_2 \colon X_3 \colon X_4))
    \end{align}
    for all $1 \leq i \leq 15$,
    where $\sigma_i$ is the action of the $2$-torsion point $T_i \in \K$, and similarly $\sigma'_i$ is the action of $T'_i \in \K'$ (as defined in~\cref{subsec:nodes}).
    Equation~\eqref{eq:lemma41} gives rise to relations between the coefficients of the cubic monomials, from which we deduce that $\phi$ is of the form as in the statement of the lemma. See {\tt section4/lemma-4\_1.m} in the code accompanying this paper. 
    Clearing denominators (as our Kummer surfaces lie in $\PP^3$), we obtain the $c_i \in \fieldbar[a,b,c,d,r_1,r_2,r_3,r_4,s_1,s_2,s_3,s_4]$. 
\end{proof}

By~\cref{lemma:33iso-form}, to determine explicit formulae for the $(3,3)$-isogeny $\phi$ generated by kernel $G = \langle R,S \rangle \subset \K$, it suffices to determine the coefficients $c_1, \dots c_5$.
We follow the method given in~\cref{sec:generalmethod} and compute the $G$-invariant cubic forms.
Define
\begin{align*}
    B^R_{ij}(X_1, X_2, X_3, X_4) &\colonequals B_{i,j}(X_1, X_2, X_3, X_4; R), \\
    B^S_{ij}(X_1, X_2, X_3, X_4) &\colonequals B_{i,j}(X_1, X_2, X_3, X_4; S).
\end{align*}
By~\cref{lemma:invaraintforms}, the cubic forms invariant under translation by $R$ and $S$ are given by
\begin{align*}
	F^R_{ijk} &\colonequals X_iB^R_{jk} + X_jB^R_{ki} + X_kB^R_{ij}, \\ 
    F^S_{ijk} &\colonequals X_iB^S_{jk} + X_jB^S_{ki} + X_kB^S_{ij},
\end{align*}
respectively, where $1 \leq i,j,k \leq 4$. 
Let $\mathcal{F}_R = \{F^R_{ijk}\}_{1 \leq i,j,k \leq 4}$ and similarly define $\mathcal{F}_S$. 
The cubic forms in $\mathcal{F}_R$ and $\mathcal{F}_S$ each generate a space of dimension 8, for which we choose a basis $$\{F^R_{111}, \allowbreak \ F^R_{234}, \allowbreak \ F^R_{222}, \allowbreak \ F^R_{134}, \allowbreak \ F^R_{333}, \allowbreak \ F^R_{124}, \allowbreak \ F^R_{444}, \allowbreak \ F^R_{123}\},$$ and similarly for $\mathcal{F}_S$. 
These spaces will intersect in a space of dimension 4, which will give a description of the $(3,3)$-isogeny. 
We compute a basis
\begin{align*}
    f_1 & := z_1F^R_{111} + z_2F^R_{234}\,, 
    & 
    f_2 & := z_3F^R_{222} + z_4F^R_{134}\,,
    \\
    f_3 & := z_5F^R_{333} + z_6F^R_{124}\,,
    &
    f_4 & := z_7F^R_{444} + z_8F^R_{123}
    \intertext{for the intersection, with $z_1, \dots, z_8 \in \fieldbar$
    such that there exist $w_1, \dots, w_8 \in \fieldbar$ with}
    f_1 &= w_1F^S_{111} + w_2F^S_{234}\,,
    &
    f_2 &= w_3F^S_{222} + w_4F^S_{134}\,,
    \\
    f_3 &= w_5F^S_{333} + w_6F^S_{124}\,,
    &
    f_4 &= w_7F^S_{444} + w_8F^S_{123}\,.
\end{align*}
From this, we obtain a $(3,3)$-isogeny $\psi = (f_1\colon f_2 \colon f_3\colon f_4): \K \rightarrow \widetilde{\K}$. 

To move $\widetilde{\K}$ to the correct form, we first define 
\begin{align*}
    D_1 := (ab-cd)(ab+cd), \  D_2 := (ac-bd)(ac+bd), \ D_3 := (ad-bc)(ad+bc).
\end{align*}
For a point $P = (x_1\colon x_2 \colon x_3 \colon x_4)$, we define
$$\gamma(P) := (D_{23}(x_1x_2ab-x_3x_4cd)+D_{13}(x_1x_3ac-x_2x_4bd)+D_{12}(x_1x_4ad-x_2x_3bc)),$$
and $h_i(P)$ as the coordinates of $(\mathcal{H}\circ\mathcal{S})(P)$, for $i=1,2,3,4$.

Applying a linear transformation \textbf{M} to $\widetilde{\K}$, where \textbf{M} is defined as
\begin{equation*}
    \mathbf{M} := \begin{pmatrix}
        1/\alpha_1 & 0 & 0 & 0 \\
        0 & 2/\alpha_2 & 0 & 0 \\
        0 & 0 & 2/\alpha_1 & 0 \\
        0 & 0 & 0 & 2/(3\alpha_2) \\
    \end{pmatrix}
\end{equation*}
and where
\begin{align*}
    \alpha_1 &:= D_3(\gamma(R)(s_4s_3ab-s_1s_2cd) - \gamma(S)(r_4r_3ab-r_1r_2cd)),\\
    \alpha_2 &:= D_1(\gamma(R)(s_2s_3ad-s_1s_4bc)-\gamma(S)(r_2r_3ad-r_1r_4bc)),
\end{align*}
we get a simple and efficiently computable expression for our $(3,3)$-isogeny $\phi := \textbf{M}(f_1, f_2, f_3, f_4)^T$, whose image is in the desired form. 
Formul\ae{} for the intersection and $\textbf{M}$ can be found and verified in the file {\tt section4/linear-transform.m} in the accompanying code.

\subsection{Explicit formulae for \texorpdfstring{$(3,3)$}{(3,3)}-isogenies.}\label{subsec:explicit-coeffs-33} 

We now give explicit formul\ae{} for the isogeny $\phi : \K \rightarrow
\K^\prime$ with kernel $G = \langle R,S \rangle$.
By~\cref{lemma:33iso-form}, it suffices to give explicit formulae for the coefficients $c_i \in \fieldbar[a,b,c,d,r_1,r_2,r_3,r_4,s_1,s_2,s_3,s_4]$ for $i \in \{1,2,3,4,5\}$. We set
\begin{align*}
    \beta_1 &:= D_{23}\big(\gamma(R)\cdot (s_3s_4ab - s_1s_2cd)-\gamma(S)\cdot (r_3r_4ab - r_1r_2cd)\big), \\
    \beta_2 &:= h_1(R)\cdot h_2(S) - h_2(R)\cdot h_1(S).
\end{align*}
Then, maintaining the notation above and letting $D_{ij} := D_i\cdot D_j$, we find
\begin{align*}
    c_1 &= 2\beta_1  h_1(R)  h_1(S), \\
    c_2 &= \beta_1\big(h_1(R)  h_2(S) + h_2(R)  h_1(S)\big) 
    \\
    & \quad {} + \beta_2\big(\gamma(R)  (s_3s_4ab-s_1s_2cd)+\gamma(S)  (r_3r_4ab-r_1r_2cd)\big)D_{23}, \\
    c_3 &= \beta_1\big(h_1(R)  h_3(S) + h_3(R)  h_1(S)\big) 
    \\
    & \quad {} + \beta_2\big(\gamma(R)  (s_2s_4ac-s_1s_3bd)+\gamma(S)  (r_2r_4ac-r_1r_3bd)\big)D_{13}, \\
    c_4 &= \beta_1\big(h_1(R)  h_4(S) + h_4(R)  h_1(S)\big)
    \\
    & \quad {} + \beta_2\big(\gamma(R)  (s_2s_3ad-s_1s_4bc)+\gamma(S)  (r_2r_3ad-r_1r_4bc)\big)D_{12}, \\
    c_5 &= 2\beta_2  \gamma(S)   \gamma(R).
\end{align*}
Note that the $c_1, \dots, c_5$ are symmetric in $R$ and $S$, as one would expect.

\begin{remark}
    The $(3,3)$-isogeny is defined over the field of definition of the fundamental constants $a,b,c,d$ of $\K$ 
    and the kernel generators $R,S$ (rather than the subgroup $\langle R, S \rangle$).
\end{remark}

\subsection{Evaluating points under the \texorpdfstring{$(3,3)$}{(3,3)}-isogeny.}\label{subsec:iso33eval}
Consider the $(3,3)$-isogeny $\phi: \K \rightarrow \K'$ and assume the coefficients $c_1, \dots, c_5$ have been computed. 
Given a point $P = (x_1 \colon x_2 \colon x_3 \colon x_4) \in \K$, the image $\phi(P) = (x'_1 \colon x'_2 \colon x'_3 \colon x'_4)$ is given by
\begin{align*}
    x'_1 := x_1(c_1x_1^2 + c_2x_2^2 + c_3x_3^2 + c_4x_4^2) + c_5x_2x_3x_4,\\
    x'_2 := x_2(c_2x_1^2 + c_1x_2^2 + c_4x_3^2 + c_3x_4^2) + c_5x_1x_3x_4,\\
    x'_3 := x_3(c_3x_1^2 + c_4x_2^2 + c_1x_3^2 + c_2x_4^2) + c_5x_1x_2x_4,\\
    x'_4 := x_4(c_4x_1^2 + c_3x_2^2 + c_2x_3^2 + c_1x_4^2) + c_5x_1x_2x_3.
\end{align*}
The fundamental theta constants of the image surface $\K^\prime$ can be computed in the same way, i.e., as $\phi((a\colon b \colon c \colon d)).$
Via Equation~\eqref{eq:EFGH}, we can then compute the constants $E',F',G',H'$ defining the equation of the surface $\K'$.

\subsection{Implementation}
\label{subsec:33implementation}
We implemented \((3,3)\)-isogeny evaluation using the formul\ae{} above.
We give explicit operation counts for $\field = \FF_q$, which will be necessary for our cryptographic application in~\cref{sec:hashfunction}.
To optimise the computation, we implement the following algorithms:
\begin{enumerate}
    \item \textsf{TriplingConstantsFromThetas}: given fundamental theta constants $(a\colon b\colon c \colon d)$, compute \emph{tripling constants} consisting of: 
    \begin{itemize}
        \item their inverses $(1/a\colon \allowbreak 1/b\colon 1/c \colon 1/d)$;
        \item their squares $(a^2\colon b^2\colon c^2 \colon d^2)$;
        \item squared dual theta constants $(A^2\colon B^2\colon C^2 \colon \allowbreak D^2)$; and
        \item their inverses $(1/A^2\colon 1/B^2 \colon 1/C^2 \colon 1/D^2)$.
    \end{itemize}
    For $\field = \FF_q$, this requires $12 \Fqmult, 4 \Fqsquare$, and $6 \Fqadd$.
    \item \textsf{Compute33Coefficients}: given coordinates of $R,S$ and the tripling constants, compute the coefficients $c_1, \dots, c_5$ defining the $(3,3)$-isogeny. When $\field = \FF_q$, this requires $76 \Fqmult, 8 \Fqsquare$, and $97 \Fqadd$.
    \item \textsf{Isogeny33Evaluate}: given the coefficients $c_1,\dots,c_5$, computes the image of a point $P\in\K$ under the corresponding $(3,3)$-isogeny (as explained in~\cref{subsec:iso33eval}). For $\field = \FF_q$, this requires $26 \Fqmult, 4 \Fqsquare$ and $16 \Fqadd$.
    \item \textsf{ComputeImageThetas}: given coefficients $c_1, \dots, c_5$ and the tripling constants, compute the fundamental theta constants defining the image curve. For $\field = \FF_q$, this requires $26 \Fqmult$ and $16 \Fqadd$.
\end{enumerate}
Details of the implementation can be found in the accompanying code.

\subsection{A note on \texorpdfstring{$(5,5)$}--isogenies on fast Kummers}\label{subsec:5-5}

Suppose we now have a $(5,5)$-subgroup of $\J[5]$, which induces a $(5,5)$-isogeny $\phi$ on the corresponding fast Kummer surface $\K = \J/\{\pm 1\}$ with kernel $G = \langle R,S \rangle$ for some $R,S \in \K[5]$ (i.e., $G$ is the image of the $(5,5)$-subgroup in $\K$).

Following the method in~\cref{sec:generalmethod}, we first compute the $G$-invariant quintic forms.
Let $B^R_{i,j}$ and $B^S_{i,j}$ be as before (where now $R,S$ are the $5$-torsion points), and define
\begin{align*}
    B^{2R}_{ij}(X_1, X_2, X_3, X_4) &\colonequals B_{i,j}(X_1, X_2, X_3, X_4; 2R), \\
    B^{2S}_{ij}(X_1, X_2, X_3, X_4) &\colonequals B_{i,j}(X_1, X_2, X_3, X_4; 2S).
\end{align*}
By~\cref{lemma:invaraintforms} (and following Flynn~\cite{flynn2015descent}), the quintic forms invariant under translation by $R$ are given by
\begin{align*}
	F^R_{ijklm} \colonequals \ &X_iB^R_{jk}B^{2R}_{lm} + X_iB^R_{jl}B^{2R}_{km} + X_iB^R_{jm}B^{2R}_{kl} + \\ 
    &X_iB^R_{kl}B^{2R}_{jm} + X_iB^R_{km}B^{2R}_{jl} + X_iB^R_{lm}B^{2R}_{jk} + \\
    & X_jB^R_{ik}B^{2R}_{lm} + X_jB^R_{il}B^{2R}_{km} + X_jB^R_{im}B^{2R}_{kl} + \\ 
    &X_jB^R_{kl}B^{2R}_{im} + X_jB^R_{km}B^{2R}_{il} + X_jB^R_{lm}B^{2R}_{ik} + \\
    &X_kB^R_{ji}B^{2R}_{lm} + X_kB^R_{jl}B^{2R}_{im} + X_kB^R_{jm}B^{2R}_{il} + \\ 
    &X_kB^R_{il}B^{2R}_{jm} + X_kB^R_{im}B^{2R}_{jl} + X_kB^R_{lm}B^{2R}_{ji} + \\
    &X_lB^R_{jk}B^{2R}_{im} + X_lB^R_{ji}B^{2R}_{km} + X_lB^R_{jm}B^{2R}_{ki} + \\ 
    &X_lB^R_{ki}B^{2R}_{jm} + X_lB^R_{km}B^{2R}_{ji} + X_lB^R_{im}B^{2R}_{jk} +\\
    &X_mB^R_{jk}B^{2R}_{li} + X_mB^R_{jl}B^{2R}_{ki} + X_mB^R_{ji}B^{2R}_{kl} + \\&X_mB^R_{kl}B^{2R}_{ji} + X_mB^R_{ki}B^{2R}_{jl} + X_mB^R_{li}B^{2R}_{jk},
\end{align*}
where $1 \leq i,j,k,l,m \leq 4$. We similarly define $F^S_{ijklm}$: the quintic forms invariant under translation by $S$.

Let $\mathcal{F}_R = \{F^R_{ijklm}\}$ and $\mathcal{F}_R = \{F^S_{ijklm}\}$.
Working modulo the equation defining the Kummer surface $\K$, 
the quintic forms in $\mathcal{F}_R$ and $\mathcal{F}_S$ each generate a space of 
dimension 12, for which we choose a basis  
\begin{align*}
    \Big\{ \, &F^R_{14444}, \allowbreak \ F^R_{23333}, \allowbreak \ F^R_{23334}, \allowbreak \ F^R_{23344}, \allowbreak \ F^R_{23444}, \allowbreak \ F^R_{24444}, \allowbreak \\ 
    & F^R_{33333}, \allowbreak \ F^R_{33334},
    F^R_{33344}, \allowbreak \ F^R_{33444}, \allowbreak \ F^R_{34444}, \allowbreak \ F^R_{44444} \, \Big\},
\end{align*}
and similarly for $\mathcal{F}_S$. 
These spaces intersect in a space of dimension 4, which gives a description of the $(5,5)$-isogeny. 
Finding the final scaling map to put the image into the correct form is left as future work.

\section{Generating \texorpdfstring{$(N^k,N^k)$}{(Nk,Nk)}-subgroups}\label{sec:kernels}

In the remaining two sections, we turn to building a hash function based on the $(3,3)$-isogenies derived in the previous section. The hash function will compute $(3^k,3^k)$-isogenies as chains of $(3,3)$-isogenies, and will start each such chain by computing a $(3^k, 3^k)$-subgroup on a fast Kummer surface. This section describes how such $(N^k, N^k)$-subgroups can be computed (for prime number $N$ and integer $k \geq 1$) in a way that is amenable to efficient and secure cryptographic implementations. 

We do this in two steps: first, in~\cref{subsec:gens} we show how to compute a \emph{symplectic} basis for the $N^k$-torsion on the Jacobian $\J$, which we then push down to the corresponding fast Kummer surface $\K = \J/\{\pm 1\}$; and second, we use this basis to compute the generators $R, S$ of the $(N^k,N^k)$-subgroup $G$ using the \emph{three-dimensional differential addition chain} introduced in~\cref{subsec:3dchains}. 

\subsection{Generating a symplectic basis on \texorpdfstring{$\K$}{K}}
\label{subsec:gens}

We first compute a symplectic basis for the $N^k$-torsion on the Jacobian $\J$ of the genus-2 curve $C_{\lambda,\mu,\nu}$ corresponding to $\K$. 
We compute this basis by generating $N^k$-torsion points on $\J$ until the $N^k$-Weil pairing condition given in~\cref{def:symplectic} is satisfied.\footnote{In our implementation, we compute the $N^k$-Weil pairing of points on $\J[N^k]$ using {\tt MAGMA}'s in-built functionality.} These points are then pushed down to $\K$ via 
\begin{align*}
\pi \colon & \J \rightarrow \K, \\
	& P \mapsto \left(\mathcal{C}_{\mathcal{I}(\mathcal{O}_\mathcal{K})} \circ \mathcal{H} \circ\mathcal{C}_{(\mathcal{I}\circ \mathcal{H} \circ \mathcal{S})(\mathcal{O}_\mathcal{K})} \mathcal{S} \circ \mathcal{H} \circ \kappa \right)(P).
\end{align*}
Here, $\mathcal{C}$, $\mathcal{H}$, $\mathcal{S}$ and $\mathcal{I}$ are the standard Kummer operations defined in~\cref{sub:operations} and $\kappa: \J \rightarrow \K^{\rm Sqr}$ maps points in Mumford coordinates on $\J$ to the \emph{squared} Kummer surface $\K^{\rm Sqr}$ as follows. For generic points 
$P = (x^2+u_1x+u_0, v_1x+v_0)\in \J$, we have 
\[ \kappa :  (x^2+u_1x+u_0,v_1x+v_0) \mapsto \left( X_1 \colon X_2 \colon X_3 \colon X_4 \right)
\]
with 
\begin{align*}
    X_1&=a^2(u_0(\mu-u_0)(\lambda+u_1+\nu)-v_0^2)  ,
    &X_2&=b^2(u_0(\nu\lambda-u_0)(1+u_1+\mu)-v_0^2),\\  
    X_3&=c^2(u_0(\nu-u_0)(\lambda+u_1+\mu)-v_0^2)  ,
    &X_4&=d^2(u_0(\mu\lambda-u_0)(1+u_1+\nu)-v_0^2).
\end{align*}
For special points $P = (x - u_0, v_0)$, the map $\kappa$ is defined by first adding a point of order $2$ on $\J$. Indeed, by adding $(x- u'_0, 0) \in \J[2]$ (with $u'_0 \neq u_0$) we get the point $$\left (x^2 - (u_0 + u'_0) x + u_0u'_0, \frac{v_0}{u_0-u'_0}x - \frac{v_0u'_0}{u_0-u'_0} \right ),$$ to which we can apply $\kappa$. The translation is then undone by applying the action of the corresponding $2$-torsion point on $\K^{\rm Sqr}$.
Finally, $\kappa(\OO_\J) := (a^2 \colon b^2 \colon c^2 \colon d^2)$. 

The map $\kappa$ is due to Bisson, Cosset and Robert~\cite{AVIsogenies},
while the subsequent operations that map from $\K^{\rm Sqr}$ to $\K$
appear in Renes and Smith~\cite[Section 4.3]{qDSA}. Note that the map $\pi$ corresponds to a $(2,2)$-isogeny, which will not affect the order of the basis points if $N$ is coprime to $2$. If, however, $2 \mid N$ then one must additionally check the order of the image points on $\K$.

\begin{remark}\label{rem:precompute}
    For applications where \(N\), \(k\), and the domain Kummer $\K$ are
    fixed, the symplectic basis for $\J[N^k]$ can be computed as part of the set-up once and for all, and the image of these basis points (under $\pi$) can be hardcoded as system parameters. For instance, this will be the case for our cryptographic application in~\cref{sec:hashfunction}. In these scenarios, optimising the efficiency of the operations in this subsection is not a priority; the important goal is to optimise the efficiency of the \emph{online} part of the $(N^k,N^k)$-subgroup generation procedure, which amounts to optimising the three-dimensional differential addition chain in the following subsection. 
\end{remark}

\subsection{Three-dimensional differential addition chains}\label{subsec:3dchains}

Let $Q_1$, $Q_2$, $Q_3$, $Q_4 \in \K$ be the images of a symplectic basis for $\J[N^k]$ under the map $\pi$ described above. In this subsection we show how to use this basis to compute the two generators $R$ and $S$ of our $(N^k,N^k)$-subgroup. 

As a first simplification, we restrict to $(N^k,N^k)$-subgroups with generators of the form
\begin{equation}\label{eq:kernelgens}
	\begin{aligned}
		R = Q_1 + [\alpha] Q_3 + [\beta] Q_4, \\
    	S = Q_2 + [\beta] Q_3 + [\gamma] Q_4,
	\end{aligned}
\end{equation}
where $\alpha, \beta, \gamma \in \ZZ/{N^k}\ZZ$. There are $N^{3k}$ such $(N^k,N^k)$-subgroups (for example, see~\cite[Table 1]{secretkeys}), and there are $O(N^{3k-1})$ subgroups that we lose by imposing this restriction~\cite[Def. 3]{secretkeys}. In other words, at least half of the $(N^k,N^k)$-subgroups can be obtained with kernel generators of the form in~\eqref{eq:kernelgens}, so in a cryptographic context we lose at most one bit of security by simplifying in this manner. 

The remainder of this subsection presents the \textsf{3DAC} algorithm that allows us to compute the kernel generators $R$ and $S$ via Equation~\eqref{eq:kernelgens}. Our task is to define an algorithm that computes $P_1+[\beta]P_2+[\gamma]P_3$ for given scalars $\beta, \gamma \in \ZZ/{N^k}\ZZ$ and for the points $P_1$, $P_2$ and $P_3$ on $\K$. The analogous computation on $\J$ could utilise a straightforward 3-way multiexponentation algorithm, but on the Kummer surface we need a three-dimensional \emph{differential addition chain}. Such an addition chain is only allowed to include the computation of the sum $Q+R$ if the difference $\pm (Q-R)$ has already been computed at a previous stage. 

Three-dimensional differential addition chains have been studied previously by Rao~\cite{3diffchain} and more generally by Hutchinson and Karabina~\cite{dMUL}. However, both of those works study the general scenario whereby three scalars are in play. Viewing Equation~\eqref{eq:kernelgens}, we see that there is no scalar multiplication of the point $P_1$ in our case, which allows for some convenient simplifications. Moreover, the chain due to Rao~\cite{3diffchain} is non-uniform and the chain due to Hutchinson and Karabina~\cite{dMUL} is not fixed length unless the input scalars are. Our algorithm \textsf{3DAC} satisfies both of these properties regardless of the input scalars, making it secure for use in cryptographic applications, such as the hash function we present in~\cref{sec:hashfunction}. We remark that these properties would not be necessary for a hash function involving only public data, but for other cryptographic applications where inputs to the hash function (or, more broadly, the scalars $\beta$ and $\gamma$) are secret, such as key derivation functions, these properties are an imperative first-step towards protecting the secret data from side-channel attacks. 

We derived the \textsf{3DAC} chain by extending the two-dimensional differential addition chain due to Bernstein~\cite{bernstein2006differential}, the fastest known two-dimensional differential addition chain that is fixed length and uniform. Beyond the points $P_1$, $P_2$, $P_3$, \textsf{3DAC} also needs seven additional combinations of sums and/or differences of these three points. We specify the full ten-tuple of inputs as 
\begin{multline*}
    \mathcal{D} := \big(
        P_1, P_2, P_3, P_2 + P_3, P_2 - P_3, P_1 - P_2,
    P_1-P_3, [2](P_2 + P_3), 
    \\
        P_1 + P_2 + P_3, P_1 - P_2 - P_3
        \big),
\end{multline*}
for points \(P_1\), \(P_2\), \(P_3\) $\in \K$. 
In line with~\cref{rem:precompute}, these additional sums and differences can be (pre)computed on $\J$ and their image in $\K$ specified as part of the system parameters. 

Given the tuple $\mathcal{D}$ above and the two scalars $\beta,
\gamma \in \ZZ/{N^k}\ZZ$, our \textsf{3DAC} algorithm computes the point
$P_1+[\beta]P_2+[\gamma]P_3$ on $\K$ using $3\ell-2$ pseudo-additions
and $\ell-1$ pseudo-doublings on $\K$, 
where
$\ell$ is the bitlength of $N^k$.
See~\cref{appendix:3dac} for the full description of the algorithm.

\section{
    A hash function from \texorpdfstring{$(3,3)$}{(3,3)}-isogenies
}
\label{sec:hashfunction}

Isogenies between \emph{superspecial} Jacobians of genus 2 curves have been proposed for use in post-quantum isogeny-based cryptography (e.g.,~\cite{castryck2020hash,g2SIDH}). We follow suit and henceforth restrict our attention superspecial Jacobians defined over $\fieldbar = \Fpbar$.

\begin{definition}
    We say that the Jacobian $\J$ of a genus 2 curve is \emph{superspecial} if the Hasse--Witt matrix $M \in \Fp^{2\times 2}$ vanishes identically. We say that a Kummer surface $\K = \J/\{\pm 1 \}$ is \emph{superspecial} if the corresponding Jacobian $\J$ is superspecial. 
\end{definition}
It can be shown that every superspecial $\J/\Fpbar$ is $\Fpbar$-isomorphic to a Jacobian defined over $\Fpsq$. Similarly,
the corresponding superspecial Kummer surface $\K/\Fpbar$ is $\Fpbar$-isomorphic to a Kummer surface with model defined over $\Fpsq$. 

As an application to exhibit our algorithms, we construct a fundamental cryptographic primitive: a hash function. The first isogeny-based hash function was introduced by Charles, Goren and Lauter who use isogenies between supersingular ellipitic curves~\cite{CGLhash}. 
The use of higher dimensional isogenies between superspecial Jacobians of genus-2 curves to construct a variant of the Charles--Goren--Lauter (CGL) hash function was previously explored by Castryck, Decru and Smith~\cite{castryck2020hash} using $(2,2)$-isogenies. They argue that although the computation of higher dimensional isogenies is more expensive, breaking the security of the hash function requires $\widetilde{O}(p^{3/2})$ time, rather than $\widetilde{O}(p^{1/2})$ time as in the CGL hash function. Therefore, smaller parameters can be used to obtain the same security. 
Following this, Castryck and Decru~\cite{castryck2021multiradical} use multiradical formulae for $(3,3)$-isogenies to construct such a hash function and obtain an asymptotic speed-up of around a factor of $9$.  
Later work by Decru and Kunzweiler~\cite{DecruKunzweiler23} construct a hash function using $(3,3)$-isogenies between general Kummer surfaces by improving on the formulae given by Bruin, Flynn and Testa~\cite{bruin2014descent}.

In this section, we describe a variant of the CGL hash function~\cite{CGLhash}, called \ourhash, that uses the formulae introduced in~\cref{sec:3-3} to compute chains of $(3,3)$-isogenies between fast Kummer surfaces. 
We obtain a speed-up of around $8$ -- $9{\tt x}$, and around $32$ -- $34{\tt x}$ compared to the Castryck--Decru and Decru--Kunzweiler hash functions, respectively, for security levels $\lambda = 128, 192$ and $256$. 
\subsection{Chains of \texorpdfstring{$(3,3)$}{(3,3)}-isogenies}

We first present the \textsf{Isogeny33Chain} routine (\cref{alg:33isochain}) for computing chains of $(3,3)$-isogenies. Though we use it as a building block for a hash function, we note that it can also be used in a variety of cryptographic applications. In particular, we have been careful to ensure that each component of the algorithm is amenable to constant-time cryptographic software. 

\subsubsection*{Tripling algorithm} We start by presenting \textsf{TPL} (\cref{alg:triple}), the algorithm to compute $[3]P$ from a point $P\in \K$ and the associated tripling constants (as defined in~\cref{subsec:33implementation}). This algorithm requires $26\Fqmult$, $12\Fqsquare$ and $32\Fqadd$. 

\begin{algorithm}
	\caption{\textsf{TPL}($P, {\tt TC}$):}\label{alg:triple}
	\begin{flushleft} 
	\textbf{Input:} Point $P\in \K$ and tripling constants ${\tt TC}$ (with $i$-th entry denoted ${\tt TC}_i$) \\
	\textbf{Output:} Point $Q \in \K$ where $Q = [3]P$.
	\end{flushleft} 
	\begin{algorithmic}[1]
    \STATE $R \leftarrow \mathcal{S}(P)$
    \STATE $R \leftarrow \mathcal{H}(R)$
    \STATE $Q \leftarrow \mathcal{S}(R)$
    \STATE $Q \leftarrow \mathcal{C}_{{\tt TC}_5}(Q)$
    \STATE $Q \leftarrow \mathcal{H}(Q)$
    \STATE $Q \leftarrow \mathcal{C}_{{\tt TC}_4}(Q)$
	\STATE $Q \leftarrow \mathcal{S}(Q)$
	\STATE $Q \leftarrow \mathcal{H}(Q)$
	\STATE $S \leftarrow \mathcal{C}_{{\tt TC}_5}(R)$
    \STATE $Q \leftarrow \mathcal{C}_S(Q)$
    \STATE $Q \leftarrow \mathcal{H}(Q)$
    \STATE $Q \leftarrow \mathcal{C}_{\mathcal{I}(P)}(Q)$ 
	\RETURN $Q$
	\end{algorithmic}
\end{algorithm}

\subsubsection*{Naïve strategies} Given a $(3^k, 3^k)$-subgroup $G = \langle R, S \rangle \subset \K[3^k]$, we use \textsf{TPL} and the algorithms from~\cref{subsec:33implementation} to compute an isogeny with kernel $\langle R, S \rangle$ as a chain of $(3,3)$-isogenies of length $k$. 
A naïve way of doing so is the following. Set $P_0 := R$, $Q_0 := S$ and $\K_0 := \K$, and then execute the following four steps for $i = 1$ to $k$:
\begin{enumerate}
	\item Compute the tripling constants on $\K_{i-1}$ using \textsf{TriplingConstantsFromThetas}
	\item Compute $3$-torsion points $(P_i,Q_i):= (3^{k-i}R, 3^{k-i}S)$ using $k-i$ repeated applications of \textsf{TPL} on $R$ and $S$. 
	\item Compute the $(3,3)$-isogeny $\varphi_i: \K_{i-1}\rightarrow \K_i$ with kernel $\langle P_i,Q_i \rangle$, and the images of $P_{i-1},Q_{i-1}$ under this isogeny using \textsf{Compute33Coefficients} and \textsf{Isogeny33Evaluate}.
	\item Compute the theta constants of the image $\K_{i}$ of $\varphi_i$ using \textsf{ComputeImageThetas}.
\end{enumerate}
The $(3^k, 3^k)$-isogeny with kernel $G$ will be given by $\varphi_k \circ \dots \circ \varphi_1: \K_0 \rightarrow \K_{k}$.

\subsubsection*{Optimal strategies} A more efficient way to compute isogeny chains is to use \emph{optimal strategies}~\cite{sidh}. This allow us to reduce the number of executions of \textsf{TPL} needed to compute the kernel at each step in the chain by storing intermediate points obtained during the triplings and pushing them through each isogeny. In our case, the cost of tripling is around $1.1${\tt x} the cost of computing the image of a point under the isogeny, and so we shift the cost in this way to obtain the strategies (see~\cite{sidh} for further details on optimising this approach). 
We give this algorithm in detail in~\cref{alg:33isochain}, and note that invoking the optimal strategies results in a $3.7$--$6.6{\tt x}$ reduction of the cost to compute a chain of $(3,3)$-isogenies for the set of parameters we specify in~\cref{subsec:implementation}.
\begin{algorithm}
	\caption{\textsf{Isogeny33Chain}($k, \OO_{\K},R,S, {\tt strategy}$):}\label{alg:33isochain}
	\begin{flushleft} 
	\textbf{Input:} Fundamental theta constants $\OO_{\K} = (a \colon b\colon c\colon d)$ defining Kummer surface $\K$, generators $R,S \in \K$ of $(3^k,3^k)$-subgroup for $k \geq 1$, and optimal strategy ${\tt strategy}$. \\
	\textbf{Output:} Fundamental theta constants defining image Kummer surface $\K'$ of $(3^k,3^k)$-isogeny $\varphi: \K \rightarrow \K'$ with $\ker \varphi = \langle R,S \rangle$.
	\end{flushleft} 
	\begin{algorithmic}[1]
    \FOR{$e = k-1$ to $1$}
		\STATE $P, Q = R, S$
		\STATE ${\tt pts} = [ \ ]$
		\STATE ${\tt inds} = [ \ ]$
		\STATE $i \leftarrow 0$
        \STATE ${\tt TC} \leftarrow \textsf{TriplingConstantsFromThetas}(\OO_{\K})$
		\WHILE{$i < k-e$}
			\STATE Append $[R, S]$ to ${\tt pts}$
			\STATE Append $i$ to ${\tt inds}$
			\STATE $m \leftarrow {\tt strategy}[k-i-e+1]$
			\FOR{$j = 1$ to $m$}
				\STATE $R \leftarrow \textsf{TPL}(R, {\tt TC})$
				\STATE $S \leftarrow \textsf{TPL}(S, {\tt TC})$
			\ENDFOR
			\STATE $i \leftarrow i + m$
		\ENDWHILE
        \STATE ${\tt cs} \leftarrow \textsf{Compute33Coefficients}(P,Q, {\tt TC})$
        \STATE $\OO_{\K} \leftarrow \textsf{ComputingImageThetas}({\tt cs})$
        \FOR{$[P_1, P_2]$ in ${\tt pts}$}
			\STATE $P_1 \leftarrow \textsf{Isogeny33Evaluate}(P_1, {\tt cs})$
			\STATE $P_2 \leftarrow \textsf{Isogeny33Evaluate}(P_2, {\tt cs})$
		\ENDFOR
		\IF{${\tt pts}$ not empty}
			\STATE $[R, S] \leftarrow {\tt pts}[-1]$
			\STATE $i \leftarrow {\tt inds}[-1]$
			\STATE Remove last element from {\tt pts} and {\tt inds}
		\ENDIF
    \ENDFOR
	\STATE ${\tt TC} \leftarrow \textsf{TriplingConstantsFromThetas}(\OO_\K)$
	\STATE ${\tt cs} \leftarrow \textsf{Compute33Coefficients}(R,S, {\tt TC})$
	\STATE $\OO_{\K} \leftarrow \textsf{ComputingImageThetas}({\tt cs})$
	\RETURN $\OO_{\K}$
	\end{algorithmic}
\end{algorithm}

\subsection{A cryptographic hash function}

We are now ready to present the hash function \ourhash that uses chains of $(3,3)$-isogenies between fast Kummer surfaces. 
For a fixed security parameter $\secp$ and working over characteristic $p \approx 2^\lambda$, 
the hash function parses the message into three scalars $\alpha, \beta, \gamma \in \ZZ/{3^k}\ZZ$ which are fed into the \textsf{3DAC} algorithm to compute a $(3^k, 3^k)$-subgroup $G = \langle R,S \rangle$. This is then used to compute the corresponding $(3^k, 3^k)$-isogeny $\varphi: \K \rightarrow \K'$, and the output of the hash function is the fundamental theta constants of the image surface $\K'$.

To optimise our hash function, we want to ensure that the $(3^k, 3^k)$-isogeny is $\Fpsq$-rational. Given a security parameter $\lambda$, we choose a suitably sized prime $p = 16f\cdot 3^k -1 \approx 2^\lambda$
(the fact that $16\mid p+1$ ensures rational $2$-torsion), where $f$ is a small cofactor, and take a small pseudo-random walk\footnote{In our implementation, we used \texttt{MAGMA}'s inbuilt Richelot isogeny routine to take 20 $(2,2)$-isogenies away from $C_0$.} from the superspecial Jacobian of the curve $C_0: y^2 = x^6+1$ to arrive at our starting Jacobian $\J = \Jac(C)$. Since $\J(\Fpsq) \cong (\ZZ/(p+1)\ZZ)^4$, the Rosenhain invariants of $C$ are all rational in $\Fpsq$ and we use these to compute $\OO_{\K} = (a \colon b \colon c \colon d)$, i.e. the fundamental theta constants of the starting Kummer surface $\K = \J/\{\pm 1 \}$. Using the methods described in~\cref{subsec:gens}, we compute a symplectic basis for $\J[3^k]$, which (together with the auxiliary sums and differences defined in~\cref{subsec:3dchains}) is pushed down to our starting fast Kummer surface  $\K$ via $\pi: \J \rightarrow \K$ to obtain the two tuples of points $\mathcal{D}_R$ and $\mathcal{D}_S$. The setup routine outputs ${\tt gens} = \{\mathcal{D}_R, \ \mathcal{D}_{S}\}$ and ${\tt data} = \{k, \ \lceil \lambda/\log(3) \rceil, \ \OO_{\K}\}$.

The hash function takes as input {\tt data}, {\tt gens} and a message\footnote{In practice, the input message {\tt msg} to the hash function would be a bit string, which would then be parsed into the scalars $\alpha, \beta, \gamma$.}  ${\tt msg}= (\alpha, \beta, \gamma)$, where $\alpha, \beta, \gamma \in \ZZ/3^k\ZZ$, and the output of the hash function is a tuple of elements $\Fpsq$, namely the fundamental theta constants of the image Kummer surface $\K'$ under the $(3^k,3^k)$-isogeny defined by scalars $(\alpha, \beta, \gamma)$. The output is of size $8\log(p)$ without normalising the theta constants $(a',  b',  c', d')$, and of size $6\log(p)$ with normalisation $(a'/d', b'/d', c'/d')$, which comes at a cost of one inversion and $3\Fqmult$. 
We fully specify \ourhash in~\cref{alg:KuHash}.

We remark that, as the message must be parsed into three scalars lying in 
$\ZZ/3^k\ZZ$, our hash function requires an input message of 
length $3k\log_2(3)$ bits. To allow for arbitrary length messages,  
after each $(3^k, 3^k)$-isogeny 
is computed, we must sample a new symplectic basis on the image Kummer surface. 
We can do this efficiently 
using~\cite[Alg. 1]{secretkeys-full} (the full version of~\cite{secretkeys}).
However, we emphasize that our main objective in presenting the hash 
function \ourhash is to benchmark our algorithm \textsf{Isogeny33Chain} for 
computing chains of $(3,3)$-isogenies
against others in the literature. We therefore restrict 
our implementation and experiments to messages of length $3k\log_2(3)$.

\begin{algorithm}
	\caption{$\ourhash({\tt msg}, {\tt data}, {\tt gens})$}\label{alg:KuHash}
	\begin{flushleft} 
	\textbf{Input:} A message {\tt msg}, auxiliary data {\tt data} and generators {\tt gens} of $\K[3^k]$. \\
	\textbf{Output:} Fundamental theta constants $(a',b',c',d')$ of image Kummer surface $\K'$
	\end{flushleft} 
	\begin{algorithmic}[1]
    \STATE Parse {\tt msg} as $\alpha, \beta, \gamma$.
	\STATE Parse {\tt data} as $k, \ell, \mathcal{O}_{\K}$.
	\STATE Parse {\tt gens} as two sets $\mathcal{D}_{R}$, $\mathcal{D}_{S}$ (see~\cref{subsec:3dchains}).
	\STATE $R \leftarrow \textsf{3DAC}(\mathcal{D}_{R}, \alpha,\beta,\ell,\OO_{\K})$
	\STATE $S \leftarrow \textsf{3DAC}(\mathcal{D}_{S}, \beta,\gamma,\ell,\OO_{\K})$
	\STATE $(a' \colon b \colon c'\colon d') \leftarrow \textsf{Isogeny33Chain}(k, \OO_{\K}, R,S)$
	\RETURN $(a',b',c',d')$
	\end{algorithmic}
\end{algorithm}

\subsection{Implementation}\label{subsec:implementation}

We implement \ourhash and give parameters for security levels $\lambda = 128, 192$, and $256$. 

\subsubsection*{Security}

Let $\secp$ be the security parameter and $p \approx 2^\lambda$. We follow the discussion in~\cite[\S 7.4]{castryck2020hash} to determine the security of the hash function \ourhash. In particular, the security of our hash function is not affected by taking $N=3$ rather than $N=2$, and as a result the security of our hash function relies on similar problems, namely~\cref{prob:sec1} and~\cref{prob:sec2} below.

\begin{problem}\label{prob:sec1}
	Given two superspecial genus 2 curves $C_1, C_2$ defined over $\Fpsq$ find a $(3^k, 3^k)$-isogeny between $\Jac(C_1)$ and $\Jac(C_2)$.
\end{problem}

\begin{problem}\label{prob:sec2}
	Given a superspecial genus 2 curve $C_1$ defined over $\Fpsq$, find
	\begin{itemize}
		\item a curve $C_2$ and a $(3^k, 3^k)$-isogeny $\Jac(C_1) \rightarrow \Jac(C_2)$, 
		\item a curve $C_2'$ and a $(3^{k'}, 3^{k'})$-isogeny $\Jac(C_1) \rightarrow \Jac(C'_2)$, 
	\end{itemize}
	such that $\Jac(C_2)$ and $\Jac(C'_2)$ are $\Fpbar$-isomorphic. Here, we can have $k=k'$, but the kernels of the corresponding isogenies must be different. 
\end{problem}
Previous works take the general Pollard-$\rho$ attack to be the best classical attack against these problems, which runs in $\widetilde{O}(p^{3/2})$. We take a more conservative approach and consider the Costello--Smith algorithm~\cite{costellosmith} to be the best classical attack, which runs in $\widetilde{O}(p)$. 
The best quantum attack is based on Grover's claw-finding algorithm and runs in $\widetilde{O}(p^{1/2})$~\cite[Theorem 2]{costellosmith}.

Considering these attacks, we obtain the following parameters:
\begin{itemize}
	\item $\lambda = 128$ \ : \  $p = 5\cdot 2^4\cdot 3^k-1$ with $k = 75$;
	\item $\lambda = 192$ \ : \ $p = 37\cdot 2^4\cdot 3^k-1$ with $k = 115$;
	\item $\lambda = 256$ \ : \ $p = 11\cdot 2^4\cdot 3^k-1$ with $k = 154$.
\end{itemize}

\subsubsection*{Cost Metric} To benchmark \ourhash, we count $\Fp$-operations. Indeed, our implementation in Python/SageMath will call underlying $\Fp$-operations to compute $\Fpsq$-operations. 
For simplicity, our cost metric will take $\Fqmult = \Fqsquare$ and ignore $\Fqadd$, as additions have only a very minor impact on performance. 
Note that it is relatively straightforward to convert this cost into a more fine-grained metric (e.g., bit operations, cycle counts, etc.).

\subsubsection*{Results}

We ran \ourhash in SageMath version 10.1 using Python 3.11.1 and record the cost, as per the cost metric above, averaging over 100 random inputs for each prime size. We present the results in~\cref{tab:results}. 
Taking the $\Fpsq$-operation count from~\cite[\S 3.2]{DecruKunzweiler23} and using our cost metric, the cost of computing the coefficients of Decru and Kunzweiler's $(3,3)$-isogeny is $6702$ (assuming 1 $\Fpsq$-multiplication is equivalent to 3 $\Fp$-multiplications). We use this to obtain a lower bound on the cost of the Decru--Kunzweiler hash function. Though this is a lower bound, we see in~\cref{tab:results} that the cost already exceeds the total cost of \ourhash. 

The codebase for the other $(3,3)$-isogeny CGL hash variant by Castryck and Decru~\cite{castryck2021multiradical} uses Gröbner basis calculations, which is not realistic to convert to a fixed number of $\Fp$-operations. Therefore, for fair comparison, we ran all three hash functions in {\tt MAGMA} V2.25-6 on Intel(R) Core\texttrademark \ i7-1065G7 CPU @ 1.30GHx $\times$ 8 with 15.4 GiB memory, and record the time taken to run the hash functions for the different $\secp$ in~\cref{tab:results}. We again average over 100 random inputs for each prime size. 

\begin{table}[ht]
	\centering 
	\renewcommand{\tabcolsep}{0.08cm}
	\renewcommand{\arraystretch}{1.5}
	\begin{tabular}{c|c|c|c|c|c|}
		\cline{2-6}
	& $\lambda$ & Message Length & Cost          & Time (s) & Time per input bit (ms) \\ \hline
		\multicolumn{1}{|c|}{\multirow{3}{*}{\ourhash}} 
		  & $128$                  & $225\log_2(3)$ & $177956$      & 0.18     & 0.50                                                               \\
		\multicolumn{1}{|c|}{}   & $192$                  & $345\log_2(3)$ & $286636$      & 0.29     & 0.53             \\
		\multicolumn{1}{|c|}{}     & $256$                  & $462\log_2(3)$ & $396942$      & 0.53     & 0.72       \\ \hline
		\multicolumn{1}{|c|}{\multirow{3}{*}{\cite{DecruKunzweiler23}}}        & $128$    & $240\log_2(3)$ & $ > 536160$ & 5.99     & 15.75     \\
		\multicolumn{1}{|c|}{}       & $171$                  & $315\log_2(3)$ & $> 703710$  & 10.16     & 20.35     \\
		\multicolumn{1}{|c|}{}           & $256$                  & $477\log_2(3)$ & $> 1065618$ & 18.29    & 24.19   \\ \hline
		\multicolumn{1}{|c|}{\multirow{3}{*}{\cite{castryck2021multiradical}}} & $128$    & $357$          & -             & 1.51     & 4.23   \\
		\multicolumn{1}{|c|}{}  & $171$    & $547$          & -             & 2.82     & 5.16    \\
		\multicolumn{1}{|c|}{}    & $256$   & $732$     & -             & 4.81    & 6.57   \\ \hline
		\end{tabular}
	\vspace{0.3cm}
	\caption{Comparison of cost using cost metric and time taken to run \ourhash and hash functions in~\cite{castryck2021multiradical} and~\cite{DecruKunzweiler23}. All results are averaged over 100 runs with random inputs. We remark that the cost of \ourhash is the same for all runs because it is uniform.}\label{tab:results}
\end{table}
Comparing the time taken per input bit for $\lambda = 128$ ad $256$, we observe a speed-up of around $8$ -- $9{\tt x}$ compared to the Castryck--Decru hash function, and around $32$ -- $34{\tt x}$ compared to the Decru--Kunzweiler hash function. For a precise comparison between implementations, however, exact $\Fp$-operation counts of the Castryck--Decru and Decru--Kunzweiler hash functions are required. 
We note that an advantage of the algorithms developed with our approach is that we do not rely on in-built functionality and all our algorithms are uniform. Therefore, we are able to give precise $\Fp$-operation counts for \ourhash.

\sloppy \printbibliography \fussy

\newpage 

\appendix
\section{
    Explicit \texorpdfstring{$(2,2)$}{(2,2)}-isogenies on fast Kummers
}
\label{appendix:2-2}

The $15$ unique $(2,2)$-subgroups are given by 
\begingroup
    \allowdisplaybreaks
\begin{align*}
    G_{1,2} &:= \{(a \colon b \colon c \colon d), \ (a \colon b \colon -c \colon -d), \ (a \colon -b \colon c \colon -d), \ (a \colon -b \colon -c \colon d) \}, \\
    G_{1,4} &:= \{(a \colon b \colon c \colon d), \ (a \colon b \colon -c \colon -d), \ (b \colon a \colon d \colon c), \ (b \colon a \colon -d \colon -c)\}, \\
    G_{1,6} &:= \{(a \colon b \colon c \colon d), \ (a \colon b \colon -c \colon -d), \ (b \colon -a \colon d \colon -c), \ (b \colon -a \colon -d \colon c) \}, \\
    G_{2,8} &:= \{(a \colon b \colon c \colon d), \ (a \colon -b \colon c \colon -d), \ (c \colon d \colon a \colon b), \ (c \colon -d \colon a \colon -b)\}, \\
    G_{2,9} &:= \{(a \colon b \colon c \colon d), \ (a \colon -b \colon c \colon -d), \ (c \colon d \colon -a \colon -b), \ (c \colon -d \colon -a \colon b) \}, \\
    G_{3,12} &:= \{(a \colon b \colon c \colon d), \ (a \colon -b \colon -c \colon d), \ (d \colon c \colon b \colon a), \ (d \colon -c \colon -b \colon a) \}, \\
    G_{3,14} &:= \{(a \colon b \colon c \colon d), \ (a \colon -b \colon -c \colon d), \ (d \colon c \colon -b \colon -a), \ (d \colon -c \colon b \colon -a) \}, \\
    G_{4,8} &:= \{(a \colon b \colon c \colon d),(b \colon a \colon d \colon c),(c \colon d \colon a \colon b),(d \colon c \colon b \colon a) \}, \\
    G_{4,9} &:= \{(a \colon b \colon c \colon d),(b \colon a \colon d \colon c),(c \colon d \colon -a \colon -b),(d \colon c \colon -b \colon -a) \}, \\
    G_{5,10} &:= \{(a \colon b \colon c \colon d),(b \colon a \colon -d \colon -c),(c \colon -d \colon a \colon -b),(d \colon -c \colon -b \colon a) \}, \\
    G_{5,11} &:= \{(a \colon b \colon c \colon d),(b \colon a \colon -d \colon -c),(c \colon -d \colon -a \colon b),(d \colon -c \colon b \colon -a) \}, \\
    G_{6,8} &:= \{(a \colon b \colon c \colon d),(b \colon -a \colon d \colon -c),(c \colon d \colon a \colon b),(d \colon -c \colon b \colon -a) \}, \\
    G_{6,9} &:= \{(a \colon b \colon c \colon d),(b \colon -a \colon d \colon -c),(c \colon d \colon -a \colon -b),(d \colon -c \colon -b \colon a) \}, \\
    G_{7,10} &:= \{(a \colon b \colon c \colon d),(b \colon -a \colon -d \colon c),(c \colon -d \colon a \colon -b),(d \colon c \colon -b \colon -a) \}, \\
    G_{7,11} &:= \{(a \colon b \colon c \colon d),(b \colon -a \colon -d \colon c),(c \colon -d \colon -a \colon b),(d \colon c \colon b \colon a)\}.
\end{align*}
\endgroup
For each $(2,2)$-subgroup $G_{i,j}$,
the corresponding $(2,2)$-isogeny $\K \rightarrow \widetilde{\K}$ is given by 
\[
    \psi_{i,j} := \mathcal{H} \circ \mathcal{S} \circ \alpha
\]
where $\alpha: \K \rightarrow \K$ 
is induced by a linear map on $\mathbb{A}^4$.
If $i$ is a root of $x^2+1$ in $\fieldbar[x]$,
then the maps $\alpha$
for each $(2,2)$-subgroup $G_{i,j}$ are
defined by the following matrices:
\begingroup
    \allowdisplaybreaks
\begin{align*}
    G_{1,2} &: \left(\begin{smallmatrix}
        1 & 0 &0 & 0 \\
        0 & 1 &0 & 0 \\
        0 & 0 &1 & 0 \\
        0 & 0 &0 & 1
    \end{smallmatrix}\right)
    &
    G_{1,4} &: \left(\begin{smallmatrix}
        1 & 1 &0 & 0 \\
        1 & -1 &0 & 0 \\
        0 & 0 &1 & 1 \\
        0 & 0 &1 & -1
    \end{smallmatrix}\right)
    &
    G_{1,6} &: \left(\begin{smallmatrix}
        1 & i &0 & 0 \\
        1 & -i &0 & 0 \\
        0 & 0 &1 & i \\
        0 & 0 &1 & -i
    \end{smallmatrix}\right)
    \\
    G_{2,8} &: \left(\begin{smallmatrix}
        1 & 0 &1 & 0 \\
        1 & 0 &-1 & 0 \\
        0 & 1 &0 & 1 \\
        0 & 1 &0 & -1
    \end{smallmatrix}\right),
    &
    G_{2,9} &: \left(\begin{smallmatrix}
        1 & 0 &i & 0 \\
        1 & 0 &-i & 0 \\
        0 & 1 &0 & i \\
        0 & 1 &0 & -i
    \end{smallmatrix}\right),
    &
    G_{3,12} &: \left(\begin{smallmatrix}
        1 & 0 &0 & 1 \\
        1 & 0 &0 & -1 \\
        0 & 1 &1 & 0 \\
        0 & 1 &-1 & 0
    \end{smallmatrix}\right),
    \\
    G_{3,14} &: \left(\begin{smallmatrix}
        1 & 0 &0 & i \\
        1 & 0 &0 & -i \\
        0 & 1 &i & 0 \\
        0 & 1 &-i & 0
    \end{smallmatrix}\right),
    &
    G_{4,8} &: \left(\begin{smallmatrix}
        1 & 1 &1 & 1 \\
        1 & 1 &-1 & -1 \\
        1 & -1 &1 & -1 \\
        1 & -1 &-1 & 1
    \end{smallmatrix}\right),
    &
    G_{4,9} &: \left(\begin{smallmatrix}
        1 & 1 & i & i \\
        1 & 1 &-i & -i \\
        1 & -1 &i & -i \\
        1 & -1 &-i & i
    \end{smallmatrix}\right),
    \\
    G_{5,10} &: \left(\begin{smallmatrix}
        -1 & 1 &1 & 1 \\
        1 & -1 &1 & 1 \\
        1 & 1 &-1 & 1 \\
        1 & 1 &1 & -1
    \end{smallmatrix}\right),
    &
    G_{5,11} &: \left(\begin{smallmatrix}
        1 & -1 &-i & -i \\
        1 & -1 &i & i \\
        1 & 1 &-i & i \\
        1 & 1 &i & -i
    \end{smallmatrix}\right),
    &
    G_{6,8} &: \left(\begin{smallmatrix}
        1 & i &1 & i \\
        1 & i &-1 & -i \\
        1 & -i &1 & -i \\
        1 & -i &-1 & i
    \end{smallmatrix}\right),
    \\
    G_{6,9} &: \left(\begin{smallmatrix}
        1 & -i &-i & -1 \\
        1 & -i &i & 1 \\
        1 & i &-i & 1 \\
        1 & i &i & -1
    \end{smallmatrix}\right),
    &
    G_{7,10} &: \left(\begin{smallmatrix}
        1 & -i &-1 & -i \\
        1 & -i &1 & i \\
        1 & i &-1 & i \\
        1 & i &1 & -i
    \end{smallmatrix}\right),
    &
    G_{7,11} &: \left(\begin{smallmatrix}
        1 & i &i & 1 \\
        1 & i &-i & -1 \\
        1 & -i &i & -1 \\
        1 & -i &-i & 1
    \end{smallmatrix}\right).
\end{align*}
\endgroup

\section{Three-dimensional Addition Chain}\label{appendix:3dac}

We describe the three-dimensional addition chain \textsf{3DAC} needed to compute kernel generators of the form
$P_1 + [\beta] P_2 + [\gamma] P_3$.
We obtained this chain by extending Bernstein's two-dimensional differential addition chain~\cite{bernstein2006differential}, the fastest known chain of this kind that is both fixed length and uniform. 
\textsf{3DAC}
computes $P_1 + [\beta] P_2 + [\gamma] P_3$,
given
scalars $\beta, \gamma \in \ZZ/{3^k}\ZZ$,
the length $\ell$ of the chain, the
fundamental theta constants $\OO_{\K}$ of a fast Kummer surface $\K$,
and a tuple of points 
\begin{multline}
    \label{eq:3dac-input-tuple}
    \mathcal{D} := \big(P_1, P_2, P_3, P_2 + P_3, P_2 - P_3, P_1 - P_2,
    P_1-P_3, [2](P_2 + P_3),
    \\
    P_1 + P_2 + P_3, P_1 - P_2 - P_3 \big)
\end{multline}
on $\K$.
\textsf{3DAC} requires three subroutines: \textsf{DBLTHRICEADD},
\textsf{ENCODE}, and \textsf{IND}.

\(\textsf{DBLTHRICEADD}\) is defined to compute the mapping 
\begin{align*}
    \big(P,Q,P-Q, R, S, R-S, T,U, T-U\big) \longmapsto
    \big([2]P,P+Q,R+S,T+U\big)
\end{align*}
for points $P$, $Q$, $R$, $S$, $T$, and $U$ in $\K$,
using one pseudo-doubling and three pseudo-additions on $\K$. 

\textsf{ENCODE} takes two $\ell$-bit scalars $\beta, \gamma \in
\ZZ/{3^k}\ZZ$ and outputs a single bit \textsf{b} and four
$(\ell-1)$-bit scalars $b_i$ for $i = 0,1,2,3$. The bit \textsf{b}
selects the input for the differential addition that kickstarts the chain (see
Step~\ref{step:first} of~\cref{alg:3DAC}), after which the $j$-th bits
(for $j =1, \dots, \ell$) of each of the four $b_i$ determine one of 16 input permutations that is fed into the $\textsf{DBLTHRICEADD}$ routine (see Step~\ref{step:switch} of~\cref{alg:3DAC}). 

\begin{algorithm}
	\caption{\textsf{ENCODE}($\beta$, $\gamma$):}\label{alg:ENCODE}
	\begin{flushleft} 
	\textbf{Input:}  two $\ell$-bit scalars $\beta=(\beta[\ell-1], \dots , \beta[0])$ and $\gamma=(\gamma[\ell-1], \dots , \gamma[0])$\\
	\textbf{Output:} a bit ${\sf b} \in \{0,1\}$, and four $(\ell-1)$-bit scalars $(b_0,b_1,b_2,b_3)$
	\end{flushleft} 
	\begin{algorithmic}[1]
	\STATE ${\sf b} \leftarrow \beta[1]$
    \FOR{$i =1$ to $\ell-1$}
	\STATE $b_1[i] \leftarrow \beta[i] \oplus \beta[i+1]$ 
         \STATE $b_0[i] \leftarrow b_1[i] \oplus \gamma[i] \oplus \gamma[i+1]$
	\STATE $b_2[i] \leftarrow \beta[i+1] \oplus \gamma[i+1]$
	\STATE $b_3[i]\leftarrow {\sf b}$
	\STATE ${\sf b} \leftarrow b_1[i] \oplus (b_0[i] \oplus 1) \otimes {\sf b}$ 
    \ENDFOR
	\RETURN ${\sf b}$, $(b_0,b_1,b_2,b_3)$
	\end{algorithmic}
\end{algorithm}

The indexing algorithm \textsf{IND} is used to choose one of four points as the last input to the $\textsf{DBLTHRICEADD}$ algorithm. 

\begin{algorithm}
	\caption{\textsf{IND}($I$):}\label{alg:ind}
	\begin{flushleft} 
	\textbf{Input:}  a 6-tuple of integers $I=(I_1, \dots, I_6)$\\
	\textbf{Output:} an integer index $\textsf{ind} \in \{1,2,3,4\}$
	\end{flushleft} 
	\begin{algorithmic}[1]
	\SWITCH {$[I_3-I_1, I_4-I_2]$}
\CASELINE{$[-1,-1]$} $\textsf{ind} \leftarrow 1$	
\CASELINE{$[\enspace \, 1, \enspace \, \, 1]$} $\textsf{ind} \leftarrow 2$	
\CASELINE{$[\enspace \,  1,-1]$} $\textsf{ind} \leftarrow 3$	
\CASELINE{$[-1,\enspace \, 1]$} $\textsf{ind} \leftarrow 4$	
\ENDSWITCH
	\RETURN $\textsf{ind}$
	\end{algorithmic}
\end{algorithm}

\begin{algorithm}
	\caption{\textsf{3DAC}($\mathcal{D}, \beta, \gamma, \ell, \OO_{\K}$):}\label{alg:3DAC}
	\begin{flushleft} 
	\textbf{Input: } Scalars $\beta, \gamma \in  \ZZ/{3^k}\ZZ$, the
        length of chain $\ell$, fundamental theta constants $\OO_{\K}$,
        and a tuple $\mathcal{D}$ of points on $\K$ as in
        Eq.~\eqref{eq:3dac-input-tuple}. \\
	\textbf{Output: } $P_1 + [\beta] P_2 + [\gamma] P_3$ where $P_1, P_2, P_3$ are $\mathcal{D}_1,\mathcal{D}_2,\mathcal{D}_3$, respectively.
	\end{flushleft} 
	\begin{algorithmic}[1]
	\STATE initialise $P \leftarrow (\mathcal{D}_4, \mathcal{D}_8, \mathcal{D}_4, \mathcal{D}_9)$ \\ 
	\STATE initialise $D \leftarrow (\mathcal{D}_2, \mathcal{D}_3, \mathcal{D}_4, \mathcal{D}_5)$ \\ 
	\STATE initialise $\Delta \leftarrow (\mathcal{D}_1, \mathcal{D}_{10}, \mathcal{D}_6, \mathcal{D}_7)$ \\ 
	\STATE initialise $I \leftarrow (1,1,2,2,1,1)$ 
	\STATE ${\sf b}, (b_0,b_1,b_2,b_3) \leftarrow \textsf{ENCODE}(\beta, \gamma)$ \label{step:first}
	\IF{${\sf b} = 1$}
    	\STATE $(P_3, I_6) \leftarrow \left((P_3, D_2, D_1), I_6+1 \right)$
  	\ELSE 
	\STATE $(P_3, I_5) \leftarrow \left((P_3, D_1, D_2), I_5+1 \right)$ 
  	\ENDIF
 	\SWITCH {$(b_0, b_1, b_2, b_3)$} \label{step:switch}
\CASELINE{$(0,0,0,0)$} $I \leftarrow (I_1+I_3,I_2+I_4,2I_3,2I_4,I_3+I_5,I_4+I_6)$	\\
 $(P_2,P_1,P_3,P_4) \leftarrow {\sf DBLTHRICEADD}(P_2,P_1,D_3,P_3,P_2,D_2,P_2,P_4,\Delta_{\textsf{IND}(I)})$
\CASELINE{$(0,0,0,1)$} $I \leftarrow (I_1+I_3,I_2+I_4,2I_3,2I_4,I_3+I_5,I_4+I_6)$	\\
 $(P_2,P_1,P_3,P_4) \leftarrow {\sf DBLTHRICEADD}(P_2,P_1,D_3,P_3,P_2,D_1,P_2,P_4,\Delta_{\textsf{IND}(I)})$
\CASELINE{$(0,0,1,0)$} $I \leftarrow (I_1+I_3,I_2+I_4,2I_3,2I_4,I_3+I_5,I_4+I_6)$	\\
$(P_2,P_1,P_3,P_4) \leftarrow {\sf DBLTHRICEADD}(P_2,P_1,D_4,P_3,P_2,D_2,P_2,P_4,\Delta_{\textsf{IND}(I)})$
\CASELINE{$(0,0,1,1)$} $I \leftarrow (I_1+I_3,I_2+I_4,2I_3,2I_4,I_3+I_5,I_4+I_6)$	\\
 $(P_2,P_1,P_3,P_4) \leftarrow {\sf DBLTHRICEADD}(P_2,P_1,D_4,P_3,P_2,D_1,P_2,P_4,\Delta_{\textsf{IND}(I)})$
\CASELINE{$(0,1,0,0)$} $I \leftarrow (I_1+I_3,I_2+I_4,2I_1,2I_2,I_1+I_5,I_2+I_6)$	\\
 $(P_2,P_1,P_3,P_4) \leftarrow {\sf DBLTHRICEADD}(P_1,P_2,D_3,P_3,P_1,D_2,P_1,P_4,\Delta_{\textsf{IND}(I)})$
\CASELINE{$(0,1,0,1)$} $I \leftarrow (I_1+I_3,I_2+I_4,2I_1,2I_2,I_1+I_5,I_2+I_6)$	\\
 $(P_2,P_1,P_3,P_4) \leftarrow {\sf DBLTHRICEADD}(P_1,P_2,D_3,P_3,P_1,D_1,P_1,P_4,\Delta_{\textsf{IND}(I)})$
\CASELINE{$(0,1,1,0)$} $I \leftarrow (I_1+I_3,I_2+I_4,2I_1,2I_2,I_1+I_5,I_2+I_6)$	\\
  $(P_2,P_1,P_3,P_4) \leftarrow {\sf DBLTHRICEADD}(P_1,P_2,D_4,P_3,P_1,D_2,P_1,P_4,\Delta_{\textsf{IND}(I)})$
\CASELINE{$(0,1,1,1)$} $I \leftarrow (I_1+I_3,I_2+I_4,2I_1,2I_2,I_1+I_5,I_2+I_6)$	\\
  $(P_2,P_1,P_3,P_4) \leftarrow {\sf DBLTHRICEADD}(P_1,P_2,D_4,P_3,P_1,D_1,P_1,P_4,\Delta_{\textsf{IND}(I)})$
\CASELINE{$(1,0,0,0)$} $I \leftarrow (I_1+I_3,I_2+I_4,2I_5,2I_6,I_3+I_5,I_4+I_6)$	\\
 $(P_2,P_3,P_1,P_4) \leftarrow {\sf DBLTHRICEADD}(P_3,P_2,D_2,P_1,P_2,D_3,P_3,P_4,\Delta_{\textsf{IND}(I)})$
\CASELINE{$(1,0,0,1)$} $I \leftarrow (I_1+I_3,I_2+I_4,2I_5,2I_6,I_1+I_5,I_2+I_6)$	\\
 $(P_2,P_3,P_1,P_4) \leftarrow {\sf DBLTHRICEADD}(P_3,P_1,D_1,P_1,P_2,D_3,P_3,P_4,\Delta_{\textsf{IND}(I)})$
\CASELINE{$(1,0,1,0)$} $I \leftarrow (I_1+I_3,I_2+I_4,2I_5,2I_6,I_3+I_5,I_4+I_6)$	\\
$(P_2,P_3,P_1,P_4) \leftarrow {\sf DBLTHRICEADD}(P_3,P_2,D_2,P_1,P_2,D_4,P_3,P_4,\Delta_{\textsf{IND}(I)})$
\CASELINE{$(1,0,1,1)$} $I \leftarrow (I_1+I_3,I_2+I_4,2I_5,2I_6,I_1+I_5,I_2+I_6)$	\\
 $(P_2,P_3,P_1,P_4) \leftarrow {\sf DBLTHRICEADD}(P_3,P_1,D_1,P_1,P_2,D_4,P_3,P_4,\Delta_{\textsf{IND}(I)})$
\CASELINE{$(1,1,0,0)$} $I \leftarrow (I_1+I_3,I_2+I_4,2I_5,2I_6,I_1+I_5,I_2+I_6)$	\\
 $(P_2,P_3,P_1,P_4) \leftarrow {\sf DBLTHRICEADD}(P_3,P_1,D_2,P_1,P_2,D_3,P_3,P_4,\Delta_{\textsf{IND}(I)})$
\CASELINE{$(1,1,0,1)$} $I \leftarrow (I_1+I_3,I_2+I_4,2I_5,2I_6,I_3+I_5,I_4+I_6)$	\\
 $(P_2,P_3,P_1,P_4) \leftarrow {\sf DBLTHRICEADD}(P_3,P_2,D_1,P_1,P_2,D_3,P_3,P_4,\Delta_{\textsf{IND}(I)})$
\CASELINE{$(1,1,1,0)$} $I \leftarrow (I_1+I_3,I_2+I_4,2I_5,2I_6,I_1+I_5,I_2+I_6)$	\\
$(P_2,P_3,P_1,P_4) \leftarrow {\sf DBLTHRICEADD}(P_3,P_1,D_2,P_1,P_2,D_4,P_3,P_4,\Delta_{\textsf{IND}(I)})$
\CASELINE{$(1,1,1,1)$} $I \leftarrow (I_1+I_3,I_2+I_4,2I_5,2I_6,I_3+I_5,I_4+I_6)$	\\
  $(P_2,P_3,P_1,P_4) \leftarrow {\sf DBLTHRICEADD}(P_3,P_2,D_1,P_1,P_2,D_4,P_3,P_4,\Delta_{\textsf{IND}(I)})$
\ENDSWITCH
	\RETURN $P_4$
	\end{algorithmic}
\end{algorithm}

\end{document}